\algnewcommand{\LineComment}[1]{\Statex \hskip\ALG@thistlm \(\triangleright\) #1}
\newtheorem{theorem}{Theorem}
\newtheorem{lemma}[theorem]{Lemma}
\newtheorem{definition}[theorem]{Definition}
\newtheorem{corollary}[theorem]{Corollary}
\renewcommand{\sharp}{\#}
\newcommand{\gphi}{G_{\Phi}}
\newcommand{\tgphi}{\widetilde{G}_{\Phi}}
\newcommand{\smoke}{\texttt{\footnotesize smoke}}
\newcommand{\stress}{\texttt{\footnotesize stress}}
\newcommand{\friend}{\texttt{\footnotesize friend}}
\newcommand{\p}[3]{#1{\big\vert}_{#2\downarrow#3}}
\newcommand{\bin}{\textsf{\tiny bin}}
\newcommand{\num}{\textup{num}}
\newcommand{\fomcbin}{\text{FOMC}_{\bin}}
\newcommand{\ttM}{\texttt{M}}
\newcommand{\cA}{\mathcal{A}}
\newcommand{\cB}{\mathcal{B}}
\newcommand{\cL}{\mathcal{L}}
\newcommand{\cS}{\mathcal{S}}
\newcommand{\va}{\bar{a}}
\newcommand{\vb}{\bar{b}}
\newcommand{\vc}{\bar{c}}
\newcommand{\vt}{\bar{t}}
\newcommand{\vv}{\bar{v}}
\newcommand{\vu}{\bar{u}}
\newcommand{\vw}{\bar{w}}
\newcommand{\vx}{\bar{x}}
\newcommand{\vy}{\bar{y}}
\newcommand{\vz}{\bar{z}}
\newcommand{\bbN}{\mathbb{N}}
\newcommand{\bbB}{\mathbb{B}}
\newcommand{\class}[1]{\textup{#1}}
\newcommand{\nls}{\class{NL}}
\newcommand{\pt}{\class{P}}
\newcommand{\npt}{\class{NP}}
\newcommand{\ps}{\class{PSPACE}}
\newcommand{\expt}{\class{EXP}}
\newcommand{\nexpt}{\class{NEXP}}
\newcommand{\shpt}{\sharp\pt}
\newcommand{\shexpt}{\sharp\expt}
\newcommand{\prob}[1]{\text{\textup{#1}}}
\newcommand{\shtwodqbf}{\sharp\prob{2-DQBF}}
\newcommand{\tool}[1]{\texttt{#1}}
\newcommand{\abs}[1]{{\left\lvert #1\right\rvert}}
\DeclarePairedDelimiterX\Set[1]{\lbrace}{\rbrace}%
{  #1 }
\DeclarePairedDelimiterX{\Paren}[1]{(}{)}{#1}
\newcommand\restr[2]{{
  \kern-\nulldelimiterspace{} 
  #1 
  {\upharpoonright}_{#2} 
  }%
}
\title{Model Counting for Dependency Quantified Boolean Formulas}
\author {
Long-Hin Fung\textsuperscript{\rm 1},
Che Cheng\textsuperscript{\rm 2},
Jie-Hong Roland Jiang\textsuperscript{\rm 2},
Friedrich Slivovsky\textsuperscript{\rm 3},
Tony Tan\textsuperscript{\rm 3}
}
\begin{document}

\maketitle

\begin{abstract}
Dependency Quantified Boolean Formulas (DQBF) generalize QBF by explicitly specifying which universal variables each existential variable depends on, instead of relying on a linear quantifier order.
The satisfiability problem of DQBF is NEXP-complete, and many hard problems can be succinctly encoded as DQBF.
Recent work has revealed a strong analogy between DQBF and SAT: $k$-DQBF (with $k$ existential variables) is a succinct form of $k$-SAT, and satisfiability is NEXP-complete for $3$-DQBF but PSPACE-complete for $2$-DQBF, mirroring the complexity gap between $3$-SAT (NP-complete) and $2$-SAT (NL-complete).

Motivated by this analogy, we study the model counting problem for DQBF, denoted $\#$DQBF.
Our main theoretical result is that $\#$2-DQBF is $\#$EXP-complete, where $\#$EXP is the exponential-time analogue of $\#$P. This parallels Valiant's classical theorem stating that $\#$2-SAT is $\#$P-complete. As a direct application, we show that first-order model counting (FOMC) remains $\#$EXP-complete even when restricted to a PSPACE-decidable fragment of first-order logic and domain size two.

Building on recent successes in reducing 2-DQBF satisfiability to symbolic model checking, we develop a dedicated 2-DQBF model counter. Using a diverse set of crafted instances, we experimentally evaluated it against a baseline that expands 2-DQBF formulas into propositional formulas and applies propositional model counting. 
While the baseline worked well when each existential variable depends on few variables, our implementation scaled significantly better to larger dependency sets. 
\end{abstract}

\begin{links}
\link{Code and benchmarks}{https://github.com/Sat-DQBF/sharp2DQR}
\link{Arxiv link}{https://arxiv.org/abs/2511.07337}
\end{links}

\section{Introduction}

There has been tremendous progress in SAT solving over the past few decades, enabling widespread applications across many areas of computing, including reasoning tasks in AI~\cite{sat-handbook,sat-museum-23,rev-sat-23}. 
However, certain problems in hardware verification and synthesis are unlikely to admit succinct encodings in propositional logic, prompting research into automated reasoning in more expressive logics~\cite{Jiang09,BalabanovJ15,SchollB01,GitinaRSWSB13,BloemKS14,ChatterjeeHOP13,KuehlmannPKG02,Ge-ErnstSSW22}.

A natural candidate for such applications is the logic of \emph{Dependency Quantified Boolean Formulas} (DQBF), an extension of Quantified Boolean Formulas (QBF) with Henkin quantifiers that annotate each existential variable with a set of universal variables it depends on~\cite{BalabanovCJ14}.
A model of a DQBF consists of \emph{Skolem functions} that map each existential variable to a truth value based on an assignment to its universal dependencies.
The fine-grained control over variable dependencies allows DQBF to naturally express problems such as constrained program synthesis~\cite{GoliaRM21} and equivalence checking of partially specified circuits~\cite{Gitina2013}.
This has led to active research over the past decade
and the development of several solvers~\cite{FrohlichKBV14,TentrupR19,GitinaWRSSB15,WimmerKBS017,SicS21,pedant21,pedant22,GoliaRM23}, as well as the inclusion of a dedicated DQBF track in recent QBF evaluations~\cite{PulinaS19}. 

While satisfiability is the central question in DQBF, 
many synthesis and verification tasks benefit from knowing \emph{how many} solutions exist. 
Counting models can help debug and refine specifications: 
for instance, an unexpectedly large number of Skolem functions may suggest that the specification admits unintended behaviour.
Model counters have been developed for QBF with one quantifier alternation~\cite{PlankMS24} as well as 
Boolean synthesis~\cite{kuldeep-skolem-24}, and more recently, for general QBF~\cite{martina24}.

In this paper, we consider the model counting problem for DQBF, denoted $\sharp$DQBF. This is a formidable problem, since even deciding whether a DQBF has a model is $\nexpt$-complete~\cite{PetersonR79,dqbf-fmcad22,us-sat-25}.
Moreover, because DQBF allows arbitrary and potentially incomparable dependency sets, existing techniques for $\sharp$QBF that rely on a linear order of quantifiers cannot be applied.

To support the intuition that $\sharp$DQBF is a particularly difficult problem, we first prove that even the model counting problem for DQBF with just two existential variables, denoted $\shtwodqbf$, is $\shexpt$-complete.
This is despite the fact that
 satisfiability of $2$-DQBF is ``only'' $\ps$-complete~\cite{dqbf-sat23}.
Our proof builds on a recent result on $2$-DNF model counting~\cite{bannach_p_2025} and uses the close correspondence between $k$-DQBF and $k$-SAT~\cite{dqbf-sat23}.
This hardness result is analogous to the well-known hardness of $\sharp$2-SAT: while $2$-SAT is solvable in polynomial time, counting its models is $\shpt$-complete~\cite{Valiant79a,Valiant79b}.

Note that functions in $\shexpt$ may output doubly exponential numbers, which require exponentially many bits.
Thus, the standard polynomial time Turing reductions for establishing $\shpt$-hardness, as in the case of $\sharp$2-SAT in~\cite{Valiant79a,Valiant79b}, are not appropriate for the class $\shexpt$.
To circumvent this issue, we introduce a new kind of polynomial-time reduction, called a \emph{poly-monious reduction} (see \Cref{sec:prelim} for the definition), which lies between
 classical parsimonious reductions
and polynomial-time Turing reductions.
Under poly-monious reductions,
$\sharp$2-SAT is still $\sharp$P-complete~\cite{bannach_p_2025}. 

Another notion of hardness for the counting problem requires the reduction to be parsimonious~\cite{Ladner89}.
However, since 2-SAT is $\nls$-complete, $\sharp$2-SAT is not $\sharp$P-hard under parsimonious reduction, unless $\nls = \npt$.
We believe that the notion of poly-monious reduction is well-suited for establishing $\shexpt$-hardness, as it strikes a balance between 
parsimonious reductions and polynomial-time 
Turing reductions in terms of strength.

As an application of the hardness of $\shtwodqbf$, we show that the combined complexity of first-order model counting (FOMC)---a central problem in statistical relational AI---is $\shexpt$-complete (over varying vocabulary).
FOMC is defined as given FO sentence $\Psi$ and a number $N$ in unary, compute the number of models of $\Psi$ with domain $\{1,\ldots,N\}$.
The combined complexity of FOMC is the complexity measured in terms of both 
the sentence $\Psi$ and the number $N$.
While the $\sharp\mathrm{EXP}$-hardness of FOMC can already be inferred from classical results in logic~\cite{Lewis80},\footnote{A close inspection of the proof in~\cite{Lewis80} shows poly-monious reductions from languages in $\nexpt$ to the Bernays-Sch\"onfinkel-Ramsey fragment of FO, whose satisfiability problem is known to be $\nexpt$-complete.} we obtain a stronger result: FOMC is $\shexpt$-hard even when the domain size is restricted to $2$ and the base logic is a $\ps$-decidable fragment of FO.\footnote{In general, the satisfiability problem for FO is undecidable~\cite{Trakhtenbrot50}.}
This may help explain why scalable FO model counters have remained elusive despite intensive research efforts for over a decade.

Motivated by our result that $\sharp$DQBF remains hard even for just two existential variables, we explore the viability of solving $\sharp$2-DQBF in practice. 
Due to the double-exponential number of possible Skolem functions, direct enumeration is infeasible. Similarly, expanding a DQBF into a propositional formula leads to an exponential blow-up, rendering state-of-the-art $\sharp$SAT solvers impractical.

Instead, we build on a recent success in reducing 2-DQBF satisfiability to model checking~\cite{dqbf-fmcad24} and interpret 2-DQBF instances as succinctly represented implication graphs.
Based on this idea, we propose a model counting algorithm that proceeds in two main phases. In the first phase, it constructs a Binary Decision Diagram (BDD) representing reachability in the implication graph.
This phase benefits from mature tools developed by the formal methods community, including the IC3 algorithm, the CUDD package for BDD manipulation, and ABC's implementation of exact reachability~\cite{bradley-manna-07,bradley-vmcai-11,een-pdr-fmcad-11,somenzi2009cudd,ABC}.
In the second phase, our algorithm counts Skolem functions by analysing each weakly connected component separately—similar in spirit to component-based decomposition in propositional model counters~\cite{GomesSS21}. Within each component, it suffices to enumerate Skolem functions for just one existential variable. We further restrict attention to \emph{partial} Skolem functions defined only on the variables local to each component. This avoids explicit enumeration and enables us to handle instances with up to $2^{2^{64}}$ Skolem functions. The techniques used in this phase combine new ideas with existing methods~\cite{pedant21,dqbf-fmcad24}.

We evaluate our implementation on a diverse set of crafted benchmarks. As a baseline, we use a pipeline that expands a DQBF to a propositional formula and applies the $\sharp$SAT solver Ganak~\cite{ganak}. While Ganak performs well on some smaller instances, its reliance on explicit expansion becomes a bottleneck as dependency sets grow. In contrast, our solver scales gracefully and consistently outperforms the baseline on DQBF with larger dependency sets.

We also performed experiments with state-of-the-art FO model counters. While our approach can only be applied to FOMC with binary relations, this is enough to encode problems such as counting the number of independent sets in highly symmetric graphs.
In some cases, our implementation was able to handle instances with more than $2^{127}$ solutions, far beyond the practical reach of current FO model counters. This indicates that an analogue of our component decomposition technique for \sharp$2$-DQBF may improve FO model counters in restricted, highly symmetric settings.

\paragraph{Related work.}
FOMC is often studied in the data complexity setting, i.e., the FO sentence is fixed and the complexity is measured only in terms of the domain size.
It is shown in~\cite{BeameBGS15} that there is an $\text{FO}^3$ sentence such that the data complexity of its FOMC is $\shpt_1$-complete.
For the two-variable fragment, the data complexity drops to PTIME~\cite{toth-kuzelka-24,tim-kuzelka-23,BeameBGS15}.
The combined complexity is $\shpt$-complete, but assuming that the vocabulary is fixed~\cite{BeameBGS15}.
A tightly related problem to FOMC is query evaluation on probabilistic databases, whose combined complexity is $\shpt$-complete~\cite{dalvi-suciu-04}, but again, under the assumption of fixed vocabulary.

The notion of combined and data complexity was introduced in~\cite{Vardi82-complexity} in the context of database query evaluation, to better understand which component (the query/the data/both) contributes more to the complexity of query evaluation.
Since then, as hinted in the previous paragraph, it has become the standard notion for establishing fine-grained complexity results for problems involving a few parameters.

\section{Preliminaries}\label{sec:prelim}

\paragraph{Notation.}
Let $\bbB=\{\bot,\top\}$, where $\bot$ and $\top$ denote the Boolean false and true values.
A literal is either a Boolean variable or its negation.
We write
$x^\top$ to denote the literal $x$ and
$x^{\bot}$ to denote $\neg x$.
The sign of the literal $x^b$ is the bit $b$.

We use the symbols $a,b,c$ to denote elements in $\bbB$, and the bar version $\va,\vb,\vc$ 
to denote strings in $\bbB^*$
with $|\va|$ denoting the length of $\va$.
Boolean variables are denoted by $x,y,z,u,v$ 
and the bar version $\vx,\vy,\vz,\vu,\vv$ 
denote vectors of Boolean variables
with $|\vx|$ denoting the length of $\vx$.
We insist that in a vector $\vx$ there is no variable occurring more than once.
Abusing the notation, we write
$\vz\subseteq \vx$ to denote that every variable in $\vz$ also occurs in $\vx$.

As usual, $\varphi(\vx)$ denotes a Boolean formula with variables $\vx$.
When it is clear from the context,
we simply write $\varphi$.
For $\vz\subseteq \vx$ and $\va\in\bbB^*$ where $|\va|=|\vz|$,
$\varphi[\vz/ \va]$ denotes the formula obtained from $\varphi$ by assigning the values in $\va$ to $\vz$.
Obviously, if $\vz=\vx$, then $\varphi[\vz/ \va]$ is either $\bot$ or $\top$.

\paragraph{Poly-monious reductions.}
Let $\Sigma$ be a finite alphabet.
A \emph{poly-monious reduction} from a function $F:\Sigma^*\to\bbN$
to another function $G:\Sigma^*\to\bbN$
is a polynomial-time deterministic Turing machine $M$
together with a polynomial 
$p(s_1,\ldots,s_t)$ such that on input word $w$, $M$ outputs $t$ strings $v_1,\ldots,v_t$ where
$F(w) = p (G(v_1),\ldots,G(v_t))$.

Note that poly-monious reductions are a slight generalization of the classical parsimonious and $c$-monious reductions, but weaker than polynomial time Turing reductions. 
Parsimonious reduction is a poly-monious reduction with the identity polynomial $p(s)=s$.
The $c$-monious reduction~\cite{bannach_p_2025} is a poly-monious reduction with the polynomial $p(s)=cs$. 
When restricted to functions in $\shpt$,
a poly-monious reduction with polynomial $p(s_1,\ldots,s_t)$ is a special case of polynomial time Turing reduction in the sense that the number of calls to the oracle is fixed to $t$, which does not depend on the input word.

\paragraph{$\shexpt$-complete functions.}
A function $F:\Sigma^* \to \bbN$ is in $\shexpt$, if there is a non-deterministic exponential time Turing machine $M$ such that for every word $w\in\Sigma^*$,
$F(w)$ is the number of accepting runs of $M$ on $w$.
It is \emph{$\shexpt$-hard}, if for every function $G \in \shexpt$, there is a poly-monious reduction from $G$ to $F$.
Finally, it is \emph{$\shexpt$-complete}, if it is in $\shexpt$ and $\shexpt$-hard.

\paragraph{Dependency Quantified Boolean Formulas (DQBF).}
A \emph{dependency quantified Boolean formula} (DQBF) in prenex normal form is a formula of the form:
\begin{align}
\label{eq:dqbf}
\Psi &\ :=\ 
\forall \vx\
\exists y_1(\vz_1) \cdots \ \exists y_k(\vz_k) 
\ \ \psi
\end{align}
where $\vx=(x_1,\ldots,x_n)$, each $\vz_i\subseteq \vx$
and $\psi$, called the \emph{matrix}, is a quantifier-free Boolean formula using variables in $\vx\cup\{y_1,\ldots,y_k\}$. 
We call $\vx$ the \emph{universal variables},
$y_1,\ldots,y_k$ the \emph{existential variables},
and each $\vz_i$ the \emph{dependency set} of $y_i$.
A $k$-DQBF is a DQBF with $k$ existential variables.
For convenience, we sometimes write 
$\exists y_i(\vz_i)$ as $\exists y_i(I_i)$ where $I_i$ is the set of indices of the variables in $\vz_i$.

A DQBF $\Psi$ as in~\eqref{eq:dqbf} is \emph{satisfiable} 
if there is a tuple $(f_1,\ldots,f_k)$, 
called \emph{Skolem functions}, such that, for every $1\leq i \leq k$, $f_i$ is a formula using only variables in $\vz_i$,
and by replacing each $y_i$ with $f_i$, the matrix $\psi$ becomes a tautology.
We call the tuple $(f_1,\ldots,f_k)$ a \emph{solution} or \emph{model} of $\Psi$
and write $(f_1,\ldots,f_k)\models\Psi$.
We refer to $\Psi$ as a \emph{uniform DQBF} if
for every model $(f_1,\ldots,f_k)\models\Psi$,
$f_1, \ldots ,f_k$ represent the same Boolean function, i.e., $|\vz_1|=\cdots = |\vz_k|=m$ and for every $\va\in\bbB^{m}$, $f_1(\va)=\cdots=f_k(\va)$.
We write $\sharp\Psi$ to denote
the number of Skolem functions of $\Psi$.

The \emph{model counting} problem for DQBF, denoted $\sharp$DQBF,
is to compute $\sharp\Psi$ for a given DQBF $\Psi$.
Its restriction to $k$-DQBF is denoted by $\sharp k$-DQBF.

\paragraph{DQBF expansion.}
We first recall the definition of the expansion of a DQBF from~\cite{dqbf-sat23}, which shows that a DQBF represents an exponentially large CNF formula.
We will need an additional notation.
For $\vz\subseteq \vx$ and $\va \in \Sigma^{|\vx|}$,
we write $\p {\va} {\vx} {\vz}$ to denote the projection of $\va$ to the components in $\vz$ according to the order of the variables in $\vx$.
For example, if $\vx=(x_1,\ldots,x_5)$ and $\vz=(x_1,x_2,x_5)$,
then $\p {\bot\bot\top\bot\top}{\vx} {\vz}$ is $\bot\bot\top$, i.e.,
the projection of $\bot\bot\top\bot\top$ to its $1^{st}$, $2^{nd}$ and $5^{th}$ bits.

Let $\Psi$ be as in Eq.~\eqref{eq:dqbf}.
For each $1\leq i\leq k$ and
for each $\vc\in \bbB^{|\vz_i|}$,
let $X_{i,\vc}$ be a variable.
For each $(\va,\vb)\in \bbB^n\times\bbB^k$,
where $\va=(a_1,\ldots,a_n)$ and $\vb=(b_1,\ldots,b_k)$,
define the clause
$C_{\va,\vb} \ := \ 
X^{\neg b_1}_{1,\vc_1}\vee \cdots \vee X^{\neg b_k}_{k,\vc_k}$,
where $\vc_i=\p {\va}{\vx}{\vz_i}$, for each $1\leq i \leq k$.
The expansion of $\Psi$, denoted by $\exp(\Psi)$, 
is the following $k$-CNF formula.
\begin{align}
\label{eq:exp}
\exp(\Psi) \ := \ &
\bigwedge_{(\va,\vb) \ \text{s.t.} \
\psi[(\vx,\vy)/ (\va,\vb)]=\bot}\quad C_{\va,\vb}
\end{align}

It is known that 
$\Psi$ is satisfiable if and only if
its expansion $\exp(\Psi)$ is satisfiable (cf.~\citeauthor{dqbf-sat23} 2023).
More precisely, a solution $(f_1,\ldots,f_k)\models\Phi$
corresponds uniquely to a satisfying assignment of $\exp(\Phi)$,
where $X_{i,\vc}=f_i(\vc)$ for every $1\leq i \leq k$ and $\vc\in\bbB^{|\vz_i|}$.

\section{Complexity of $\sharp$DQBF}\label{sec:hardness}

In this section, we will analyse the complexity of $\sharp$DQBF, starting with $\sharp{3}$-DQBF.
It is straightforward that $\sharp{3}$-DQBF is in $\shexpt$.
It is $\shexpt$-hard since every language in $\nexpt$ can be reduced parsimoniously
in polynomial time to $3$-DQBF~\cite{dqbf-sat23}.
This gives us the following theorem.

\begin{theorem}\label{theo:3dqbf}
$\sharp 3$-DQBF is $\shexpt$-complete.
\end{theorem}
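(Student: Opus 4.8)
The plan is to establish the two directions of completeness separately---membership of $\sharp{3}$-DQBF in $\shexpt$, and $\shexpt$-hardness---and for both to exploit the tight, \emph{model-level} correspondence between $3$-DQBF and nondeterministic exponential-time computation. The only new ingredient beyond the known satisfiability result is that this correspondence can be made parsimonious, i.e.\ it preserves the number of solutions rather than merely their existence.

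For membership, I would exhibit a nondeterministic exponential-time machine $M$ whose accepting runs on a $3$-DQBF $\Psi=\forall\vx\,\exists y_1(\vz_1)\,\exists y_2(\vz_2)\,\exists y_3(\vz_3)\,\psi$ are in bijection with the models of $\Psi$. A model is a triple of Skolem functions $(f_1,f_2,f_3)$ with $f_i\colon\bbB^{|\vz_i|}\to\bbB$, each representable by a truth table of at most $2^{|\vx|}$ bits. On input $\Psi$, $M$ first nondeterministically writes down these three truth tables (exponentially many bits, hence within the time bound), and then deterministically verifies that the substitution makes the matrix a tautology: it iterates over all $\va\in\bbB^{|\vx|}$, looks up $b_i=f_i(\p{\va}{\vx}{\vz_i})$ for $i=1,2,3$, and checks that $\psi[(\vx,\vy)/(\va,\vb)]=\top$, where $\vb=(b_1,b_2,b_3)$. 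Since the verification phase is deterministic, each valid triple corresponds to exactly one accepting run and vice versa, so the number of accepting runs equals $\sharp\Psi$; this places $\sharp{3}$-DQBF in $\shexpt$.

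For hardness, I would fix an arbitrary $G\in\shexpt$ witnessed by a nondeterministic exponential-time machine $M_G$, so that $G(w)$ is the number of accepting runs of $M_G$ on $w$. I would then invoke the polynomial-time reduction of \cite{dqbf-sat23} mapping $(M_G,w)$ to a $3$-DQBF $\Psi_w$ that encodes the accepting computations of $M_G$ on $w$, with the universal variables $\vx$ indexing the exponentially many cells of the computation tableau and the Skolem functions encoding its contents. The essential point to check is that this reduction is parsimonious at the level of models: each accepting run of $M_G$ on $w$ corresponds to exactly one satisfying assignment of the expansion $\exp(\Psi_w)$, which by the preliminaries corresponds uniquely to a model of $\Psi_w$, so that $\sharp\Psi_w=G(w)$. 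Because a parsimonious reduction is the special case of a poly-monious reduction with the identity polynomial $p(s)=s$, this yields the required poly-monious reduction from $G$ to $\sharp{3}$-DQBF, establishing $\shexpt$-hardness.

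I expect the main obstacle to lie entirely in the hardness direction, namely confirming that the known NEXP-completeness reduction is genuinely model-preserving rather than only satisfiability-preserving. Cook--Levin-style tableau encodings frequently introduce auxiliary (Tseitin-like) variables whose values, although determined on consistent computations, could in principle admit spurious freedom that inflates the solution count. I would therefore inspect the construction of \cite{dqbf-sat23} and argue that every model of $\Psi_w$ is pinned down by a single accepting run of $M_G$, ruling out any overcounting. Once this parsimony is secured, the completeness statement of \Cref{theo:3dqbf} follows immediately from the two directions above.
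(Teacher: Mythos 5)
Your proposal is correct and follows essentially the same route as the paper: membership by guessing the three Skolem-function truth tables and verifying deterministically, and hardness by observing that the polynomial-time reduction from $\nexpt$ languages to $3$-DQBF in~\cite{dqbf-sat23} is parsimonious, hence a poly-monious reduction with $p(s)=s$. The paper states both directions in exactly this way (treating membership as straightforward and citing~\cite{dqbf-sat23} for the parsimonious reduction), so no further comparison is needed.
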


\Cref{theo:3dqbf} is not surprising, given that the satisfiability problem for $3$-DQBF is already $\nexpt$-complete.
We will strengthen it by showing that $\shexpt$-hardness already holds for 2-DQBF,
whose satisfiability problem is $\ps$-complete.

Before we can prove this, we need to introduce some further terminology.
First, we recall the notion of succinct representation of graphs introduced in~\cite{avi-succ-83}.
In such a representation, instead of being given the list of edges in a graph, we are given a Boolean circuit $C(\vx,\vy)$, where $\vx,\vy$ are vectors of Boolean variables of length $n$. 
The circuit $C$ represents a graph $G_C$ where
the set of vertices is $\bbB^n$ and
there is an edge oriented from $\va$ to $\vb$, denoted $\va\to\vb$, iff
$C(\va,\vb)=\top$.

We will interpret a $2$-CNF formula $F$ as a directed graph, called the \emph{implication graph of $F$}, where each clause $(\ell_1\lor\ell_2)$ represents two edges $(\neg \ell_1\to\ell_2)$ and $(\neg\ell_2\to\ell_1)$.
If $n$ is the number of variables in $F$,
each literal can be encoded as a binary string $a_0a_1\cdots a_{\log n} \in \bbB^{1+\log n}$,
where $a_0$ is the sign 
and $a_1\cdots a_{\log n}$ is the name of the variable.

Finally, we need the notion of \emph{projection} introduced in~\cite{skyum-valiant-V85}.
Intuitively, a projection is a special kind of polynomial-time reduction where each bit $j$ in the output is determined either by the length of the input or by bit $i$ in the input, where the index $i$ can be computed efficiently from index $j$ and the length of the input.

We recall the following lemma from~\cite{dqbf-sat23}, which is inspired by the result in~\cite{yanna-succ-86}.

\begin{lemma}\label{lem:projection}\cite{dqbf-sat23}
Suppose there is a projection $\cA$ that 
takes as input a CNF formula
and outputs a graph.
Then, there is a polynomial-time algorithm 
that transforms a DQBF instance $\Psi$
to a circuit $C$ that succinctly represents 
the graph $\cA(\exp(\Psi))$.
\end{lemma}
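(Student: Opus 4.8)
The plan is to prove the lemma in two steps. First I would show that the exponentially long encoding of $\exp(\Psi)$ admits a \emph{succinct} description: a polynomial-size circuit, computable in polynomial time from $\Psi$, that on input the index of a bit of the encoding outputs that bit. Second, I would exploit the fact that $\cA$ is a \emph{projection} to compose this succinct description with $\cA$, obtaining a circuit $C$ that succinctly represents $\cA(\exp(\Psi))$. The reason a projection (rather than an arbitrary polynomial-time reduction) is essential is precisely that each output bit of $\cA$ depends on at most one input bit, so the composition incurs only a polynomial blow-up; an arbitrary reduction could require reading the entire exponential input to produce a single output bit.

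For the first step I would fix an encoding of $\exp(\Psi)$ in which its variables $X_{i,\vc}$ are indexed by the pair $(i,\vc)$ and its candidate clauses $C_{\va,\vb}$ are indexed by the pair $(\va,\vb)\in\bbB^n\times\bbB^k$; each such index is a binary string of length polynomial in $|\Psi|$. The content at each bit of the encoding is computable by a small circuit: to decide whether a clause $C_{\va,\vb}$ is actually present one evaluates $\psi[(\vx,\vy)/(\va,\vb)]$, which is done by a circuit of size $O(|\psi|)$ simulating the matrix $\psi$ on the concrete assignment $(\va,\vb)$; and to recover the literals of a present clause, the relevant variable index $\p{\va}{\vx}{\vz_i}$ is obtained by pure bit-selection (itself a projection) while the sign is just $\neg b_i$. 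Assembling these pieces yields a single polynomial-size circuit that outputs an arbitrary bit of $\exp(\Psi)$ from its index.

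For the second step I would recall that a projection produces each output bit either as a constant determined by the input length or as a designated input bit, where the index of that input bit is computable in polynomial time---hence by a polynomial-size circuit---from the output-bit index together with the input length. Here the input length is $|\exp(\Psi)|$, which although exponential has only polynomially many bits and is determined by $\Psi$, so it can be hard-wired. The circuit $C(\vu,\vv)$ deciding the edge $\vu\to\vv$ of $\cA(\exp(\Psi))$ is then built as follows: from the vertex names $\vu,\vv$ (binary strings of length polynomial in $|\Psi|$, since $\cA$ is polynomial and $|\exp(\Psi)|=2^{O(|\Psi|)}$) it computes the index of the corresponding output bit of $\cA$; it applies the projection's index map to obtain either a constant or the index $i$ of an input bit of $\exp(\Psi)$; and in the latter case it feeds $i$ into the succinct circuit of the first step to read off that bit. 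Every stage is a polynomial-size circuit, and the entire construction runs in polynomial time from $\Psi$, as required.

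The main obstacle is the bookkeeping that must align three encodings simultaneously: the layout of variables and clauses inside the string $\exp(\Psi)$, the projection's index map on that string, and the circuit simulation of $\psi$. Conceptually, however, this is routine once the two structural facts are in hand---that $\psi$ can be evaluated on an explicit assignment by a circuit of size linear in $|\psi|$, and that every projection invoked (both $\cA$ and the bit-selections $\p{\va}{\vx}{\vz_i}$) touches at most one input bit per output bit. It is exactly this per-bit locality of projections that prevents the exponential size of $\exp(\Psi)$ from leaking into the size of $C$.
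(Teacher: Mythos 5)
Your argument is correct and is essentially the standard one: the paper itself gives no proof of this lemma but imports it from \cite{dqbf-sat23} (in turn based on \cite{yanna-succ-86}), and the underlying argument there is exactly your two-step composition---a polynomial-size circuit that computes any requested bit of $\exp(\Psi)$ from its index (evaluating $\psi$ on the concrete assignment $(\va,\vb)$ and doing bit-selection for $\p{\va}{\vx}{\vz_i}$), composed with the projection's per-output-bit index map. The only point worth stating explicitly is that ``efficiently computable'' for the projection's index map must mean polynomial in the \emph{bit-length} of the indices (so that it is realized by a polynomial-size circuit), and that a projection may also output the negation of an input bit; with those two clarifications your write-up is complete.
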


Using Lemma~\ref{lem:projection},
we can prove the following.

\begin{lemma}\label{lem:reduction-dqbf-2dqbf}
Suppose there is a projection $\cA$ 
that takes as input a CNF formula
and outputs a 2-CNF formula.
Then, there is a polynomial time algorithm $\cB$
that transforms a DQBF instance $\Psi$ to a 2-DQBF instance $\Phi$ such that
$\sharp\Phi = \sharp \cA(\exp(\Psi))$.
\end{lemma}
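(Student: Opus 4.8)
The plan is to combine Lemma~\ref{lem:projection} with the expansion correspondence to produce a 2-DQBF $\Phi$ whose expansion is exactly (the implication-graph 2-CNF associated to) $\cA(\exp(\Psi))$, and then to verify that the model counts coincide. Since $\cA$ is a projection from CNF formulas to 2-CNF formulas, I would first apply Lemma~\ref{lem:projection} to the projection $\cA'$ that sends a CNF formula to the \emph{implication graph} of $\cA$ applied to it: the implication graph is itself obtainable as a projection (each literal is indexed by a sign bit together with the variable name as described before the lemma, and each clause $(\ell_1\vee\ell_2)$ contributes the two edges $\neg\ell_1\to\ell_2$ and $\neg\ell_2\to\ell_1$, so each output edge bit depends on a single, efficiently computable input bit). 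This yields in polynomial time a circuit $C(\vu,\vv)$ that succinctly represents the directed graph on vertex set $\bbB^{1+\log N}$ (where $N$ is the number of variables of $\cA(\exp(\Psi))$) encoding the implications of $\cA(\exp(\Psi))$.

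Next I would turn this succinctly represented implication graph back into a 2-DQBF. The idea is the inverse of the expansion construction: the variables $X_{i,\vc}$ of $\exp(\Phi)$ should correspond to the literals/vertices of the implication graph, and the binary clauses of $\exp(\Phi)$ should be exactly the edges encoded by $C$. Concretely, I would build a 2-DQBF $\Phi$ with universal variables indexing a vertex (via the string $a_0a_1\cdots a_{\log N}$) and two existential variables $y_1,y_2$ whose dependency sets read off the two endpoints of an edge from $C(\vu,\vv)$; the matrix is chosen so that, under the expansion of Eq.~\eqref{eq:exp}, the falsifying assignments $(\va,\vb)$ generate precisely the clauses $C_{\va,\vb}$ that spell out each edge $\vu\to\vv$ of the implication graph. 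Writing the matrix so that $\psi[(\vx,\vy)/(\va,\vb)]=\bot$ holds exactly when $C$ asserts the corresponding edge ensures $\exp(\Phi)=\cA(\exp(\Psi))$ as 2-CNF formulas (up to the fixed literal-to-variable encoding).

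The model-count equality then follows from the expansion correspondence recalled in the preliminaries: a solution $(f_1,f_2)\models\Phi$ corresponds uniquely to a satisfying assignment of $\exp(\Phi)$ via $X_{i,\vc}=f_i(\vc)$, and likewise the satisfying assignments of $\cA(\exp(\Psi))$ are by definition the models being counted on the right-hand side. Since the two CNF formulas are literally equal, their satisfying assignments are in bijection with each other and hence with the Skolem-function tuples of $\Phi$, giving $\sharp\Phi=\sharp\cA(\exp(\Psi))$. The construction is polynomial time because $C$ is produced in polynomial time by Lemma~\ref{lem:projection} and $\Phi$ is read off from $C$ by a direct syntactic transformation.

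The main obstacle I anticipate is the bookkeeping that makes $\exp(\Phi)$ equal, rather than merely isomorphic, to $\cA(\exp(\Psi))$. Two technical points need care: first, the literal encoding $a_0a_1\cdots a_{\log N}$ introduces a sign bit, so I must ensure that the existential-variable structure of $\Phi$ and the projected index functions of $\cA'$ are aligned so that a variable $X_{i,\vc}$ of $\exp(\Phi)$ is matched to the correct literal (not its negation) in $\cA(\exp(\Psi))$; second, I must confirm that reducing to exactly \emph{two} existential variables is compatible with representing binary clauses---each clause of a 2-CNF has at most two literals, and the expansion in Eq.~\eqref{eq:exp} with $k=2$ produces clauses of width at most two, so the dependency sets $\vz_1,\vz_2$ can be chosen to pick out the two endpoints without forcing spurious additional clauses. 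Verifying that no extra clauses or collapsed clauses arise (for instance, handling unit clauses or repeated literals that the implication-graph encoding may generate) is the delicate part, but it reduces to a finite case analysis on the matrix design.
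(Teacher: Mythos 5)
Your overall strategy matches the paper's: apply Lemma~\ref{lem:projection} to the composition of $\cA$ with the (projection-computable) implication-graph encoding to obtain a circuit succinctly representing the implication graph of $\cA(\exp(\Psi))$, then read off a $2$-DQBF whose expansion realizes exactly those edges. The paper's circuit has the form $C(u,\vx,u',\vx')$ with $u,u'$ the sign bits, and its $2$-DQBF is $\forall\vx\forall\vx'\,\exists y_1(\vx)\exists y_2(\vx')\ \alpha\land\beta$, where $\beta$ enforces that a clause $X_{1,\va}^{b}\to X_{2,\va'}^{b'}$ occurs in $\exp(\Phi)$ precisely when $(b,\va)\to(b',\va')$ is an edge of $G_C$.

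There is, however, a genuine gap at the point where you claim $\exp(\Phi)$ and $\cA(\exp(\Psi))$ are ``literally equal'' so that their satisfying assignments are in bijection. By Eq.~\eqref{eq:exp}, every clause of the expansion of a $2$-DQBF has one literal over an $X_{1,\cdot}$ variable and one over an $X_{2,\cdot}$ variable, so $\exp(\Phi)$ lives over \emph{two disjoint copies} $\{X_{1,\va}\}$ and $\{X_{2,\va}\}$ of the variable set of $\cA(\exp(\Psi))$. No choice of matrix or dependency sets alone can make the two formulas coincide, and the model count of this ``doubled'' formula does not in general equal $\sharp\cA(\exp(\Psi))$. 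The missing ingredient is the uniformity conjunct $\alpha:=(\vx=\vx')\to(y_1=y_2)$ in the paper's matrix, which forces every model $(f_1,f_2)$ to satisfy $f_1=f_2$ and thereby identifies $X_{1,\va}$ with $X_{2,\va}$; only after this identification do the clauses of $\exp(\Phi)$ become exactly the implications of $\cA(\exp(\Psi))$ and the counts agree. This is not the sign-bit bookkeeping you flag as the delicate part: without $\alpha$ the conclusion $\sharp\Phi=\sharp\cA(\exp(\Psi))$ is simply false in general. (Uniformity is also what the paper later exploits in Corollary~\ref{cor:2dqbf-shexpt-complete} and Theorem~\ref{theo:fomcbin-hard}, so this conjunct is doing real work beyond this lemma.)
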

\begin{proof}
Viewing $2$-CNF formula as a graph
and applying~\Cref{lem:projection},
there is a polynomial time algorithm $\cA^*$ that transforms a DQBF $\Psi$ to a circuit $C$ that succinctly represents the implication graph of $\cA(\exp(\Psi))$.

The desired algorithm $\cB$ works as follows.
Let $\Psi$ be the input DQBF.
First, run $\cA^*$ on $\Psi$ to obtain the circuit $C(u,\vx,u',\vx')$,
where $\vx,\vx'$ encode the names of variables and $u,u'$ represent the signs of literals.
Then, output the 2-DQBF
$\Phi\ :=\ \forall \vx \forall \vx'\
\exists y_1(\vx) \exists y_2(\vx')
\ \alpha\ \land\ \beta$,
where
\begin{align*}
\alpha\ := \ & (\vx=\vx')  \to  (y_1=y_2)
\\
\beta\ :=\ & \bigwedge_{b,b'\in \bbB}
C(b,\vx,b',\vx') \ \iff \ (y_1^b \to y_2^{b'})
\end{align*}
Intuitively,
$\alpha$ states that $\Phi$ is a uniform DQBF
and $\beta$ states that the implication graph of the expansion must have the same edges as $G_C$.

We claim that $\sharp \Phi=\sharp \cA(\exp(\Psi))$, i.e., 
$\sharp \Phi$ is precisely the number of solutions of the 2-CNF formula represented by the circuit $C$.
By the definition of $\beta$,
$(b,\va)\to (b',\va')$ is an edge in the graph $G_C$ iff
a clause $X_{1,\va}^b\to X_{2,\va'}^{b'}$ is in $\exp(\Phi)$.
Since $\alpha$ states that Skolem functions for $y_1,y_2$ must be the same,
the indices $1$ and $2$ in the literals 
$X_{1,\va}^b$ and $X_{2,\va'}^{b'}$ can be dropped.
It is equivalent to saying that
$(b,\va)\to (b',\va')$ is an edge in the graph $G_C$ iff
a clause $X_{1,\va}^b\to X_{1,\va'}^{b'}$ is in $\exp(\Phi)$.
Therefore, $\sharp \Phi = \sharp \cA(\exp(\Psi))$. 
\end{proof}

\begin{lemma}\label{lem:reduction-2-dqbf}
There is a polynomial-time reduction that transforms a DQBF $\Psi$ into two $2$-DQBFs $\Phi_1$ and $\Phi_2$ such that
$\sharp\Psi = \sharp\Phi_1 - \sharp\Phi_2$. 
\end{lemma}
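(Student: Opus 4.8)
The plan is to reduce everything to the tool already in hand, \Cref{lem:reduction-dqbf-2dqbf}, and to feed it the right projections. The starting observation is that, by the expansion correspondence recalled in \Cref{sec:prelim}, $\sharp\Psi = \sharp\exp(\Psi)$, so it is enough to express the number of satisfying assignments of the (exponentially large) CNF $\exp(\Psi)$ as a difference of two $2$-CNF model counts, both obtained from $\exp(\Psi)$ by projections; \Cref{lem:reduction-dqbf-2dqbf} then pulls each of these two counts back to the model count of a $2$-DQBF, and the difference survives.

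First I would invoke the recent $2$-DNF counting result of \cite{bannach_p_2025} to obtain a projection $\cA'$ sending a CNF $F$ to a $2$-DNF formula on $N = N(F)$ variables whose models it counts. The difference then appears by complementation: for any $2$-DNF $D$ on $N$ variables, $\sharp D = 2^N - \sharp(\neg D)$, where $\neg D$ is a $2$-CNF and $2^N$ is itself the model count of the clause-free $2$-CNF on those $N$ variables. Reading this off position-wise yields two projections $\cA_1$ (the clause-free $2$-CNF on the variable set of $\cA'(F)$) and $\cA_2 = \neg\cA'$ (the $2$-CNF complement of $\cA'(F)$), both mapping CNF to $2$-CNF, with $\sharp F = \sharp\cA_1(F) - \sharp\cA_2(F)$ for every CNF $F$. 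Applying \Cref{lem:reduction-dqbf-2dqbf} to $\Psi$ with $\cA_1$ yields, in polynomial time, a $2$-DQBF $\Phi_1$ with $\sharp\Phi_1 = \sharp\cA_1(\exp(\Psi))$, and applying it with $\cA_2$ yields a $2$-DQBF $\Phi_2$ with $\sharp\Phi_2 = \sharp\cA_2(\exp(\Psi))$. Chaining the identities gives
\begin{align*}
\sharp\Psi = \sharp\exp(\Psi) = \sharp\cA_1(\exp(\Psi)) - \sharp\cA_2(\exp(\Psi)) = \sharp\Phi_1 - \sharp\Phi_2,
\end{align*}
which is the claim, and the construction runs in polynomial time because both ingredients do.

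The main obstacle I expect lies in matching the interface of \Cref{lem:reduction-dqbf-2dqbf}, which is available only for \emph{projections}: it succinctly represents $\cA(\exp(\Psi))$ by a circuit (\Cref{lem:projection}), and this succinct-representation step does not compose with an arbitrary polynomial-time reduction. Thus the real work is to verify that the reduction of \cite{bannach_p_2025}, together with the complementation step above, is genuinely position-wise, and to pin down its combining polynomial exactly as the two-term difference $s_1 - s_2$ rather than a scaled variant; any stray constant factor coming from a $c$-monious formulation would have to be absorbed or shown to be $1$. A secondary, routine point is clause width: $\exp(\Psi)$ is a $k$-CNF with $k$ the number of existential variables of $\Psi$, so if \cite{bannach_p_2025} assumes bounded-width inputs I would first replace $\Psi$ by a count-preserving $3$-DQBF via \cite{dqbf-sat23}, making $\exp(\Psi)$ a $3$-CNF before the projections are applied. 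Once the projection property and the exact polynomial are confirmed, the remaining verification that $\cA_1,\cA_2$ output $2$-CNFs and that the combination is exactly subtraction is immediate.
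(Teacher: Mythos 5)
Your overall skeleton---produce two $2$-CNF projections of $\exp(\Psi)$ whose model counts differ by exactly $\sharp\exp(\Psi)$, then lift each through \Cref{lem:reduction-dqbf-2dqbf} and subtract---is precisely the paper's route, and your emphasis on verifying the projection property and on the combining polynomial being exactly $s_1-s_2$ is well placed. The gap is in how you manufacture the difference. You posit a projection $\cA'$ taking a CNF $F$ to a $2$-DNF $D$ with $\sharp F=\sharp D$, and then complement. No such parsimonious reduction can exist unless $\pt=\npt$: a $2$-DNF (indeed any DNF) is satisfiable iff some term is non-contradictory, which is checkable in polynomial time, so ``$\sharp D\geq 1$'' is polynomial-time decidable, and $\sharp F=\sharp D$ would then decide satisfiability of $F$. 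This is also not what \cite{bannach_p_2025} proves. Their actual construction (reviewed in the appendix) keeps the original variables and adds indicator variables $c_i$ that may be set to true only on falsified clauses, together with variables tracking the parity of the number of clauses so marked; the resulting single $2$-CNF $\phi$ satisfies $\sharp F=\sharp(\phi\land\neg p_m)-\sharp(\phi\land p_m)$ by an inclusion-exclusion argument over subsets of falsified clauses. The two formulas $F_1=\phi\land\neg p_m$ and $F_2=\phi\land p_m$ are what must be fed into \Cref{lem:reduction-dqbf-2dqbf}, not a clause-free formula and a complement.

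With the correct decomposition substituted, the rest of your argument goes through as you describe: the construction is position-wise (each output clause is determined by a single input literal plus index arithmetic), hence a projection, and the combination is exact subtraction with no stray constant to absorb. Your secondary worry about clause width is also unnecessary: the clauses of type (R1), namely $c_i\to\neg\ell_{i,j}$ for each literal $\ell_{i,j}$ of $C_i$, are binary regardless of the width of $C_i$, so no preliminary reduction of $\Psi$ to $3$-DQBF is needed.
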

\begin{proof}
It is shown in~\cite{bannach_p_2025} that there is a polynomial-time reduction that takes as input a CNF formula $F$ 
and outputs two $2$-CNF formulas $F_1$ and $F_2$ such that $\sharp F = \sharp F_1-\sharp F_2$.
We observe that their reduction is in fact a projection.
Using~\Cref{lem:reduction-dqbf-2dqbf}, we obtain the desired reduction.
\end{proof}

The proof of Lemma~\ref{lem:reduction-2-dqbf} is non-constructive.
We can strengthen it by giving an explicit reduction that runs in almost linear time, as stated in Lemma~\ref{lem:explicit-reduction-2-dqbf}.
The run time is quadratic in the number of existential variables and linear in the length of the matrix.

\begin{lemma}
\label{lem:explicit-reduction-2-dqbf}
There is a reduction that transforms a DQBF $\Psi$ into two $2$-DQBF $\Phi_1$ and $\Phi_2$ such that
$\sharp\Psi = \sharp\Phi_1 - \sharp\Phi_2$.
The reduction runs in time $O(k^2|\psi|)$, where $k$ is the number of existential variables in $\Psi$ and $\psi$ is the matrix of $\Psi$.
\end{lemma}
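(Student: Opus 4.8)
The plan is to make the non-constructive reduction of Lemma~\ref{lem:reduction-2-dqbf} fully explicit by tracing exactly how a general DQBF $\Psi$ is converted, through its expansion and the $2$-CNF splitting of~\cite{bannach_p_2025}, into a succinct circuit $C$ feeding the construction $\Phi := \forall\vx\forall\vx'\ \exists y_1(\vx)\exists y_2(\vx')\ \alpha\land\beta$ from Lemma~\ref{lem:reduction-dqbf-2dqbf}. The key observation is that we never materialize $\exp(\Psi)$; instead we build $C(u,\vx,u',\vx')$ directly from the matrix $\psi$ so that $C$ outputs $\top$ exactly on the literal pairs that form edges in the implication graph of $\cA(\exp(\Psi))$. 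First I would fix the literal encoding: each literal of $\exp(\Psi)$ is a pair $(i,\vc)$ with $1\le i\le k$ and $\vc\in\bbB^{|\vz_i|}$ together with a sign bit, so a literal is named by $\lceil\log k\rceil$ bits for the index plus the bits of $\vc$, giving vectors $\vx,\vx'$ of length $O(\log k + n)$. I would then write down $C$ as a Boolean circuit that, given two literal encodings, decodes the indices $i,j$ and the partial assignments, checks via the splitting gadget of~\cite{bannach_p_2025} whether the corresponding $2$-clause is generated, and verifies consistency with $\psi$ evaluated on the induced assignment.

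The central point is that the $2$-CNF splitting of~\cite{bannach_p_2025} is a projection, so each output bit (equivalently, each edge of the implication graph) depends on only a constant number of input bits whose positions are computable from the output position. This lets me assemble $C$ as a polynomial-size circuit whose depth and size are controlled by the cost of two things: evaluating the projection's local rule, and evaluating $\psi$ under the assignment determined by $(\vx,\vx')$. The first is $O(1)$ per output by the projection property; the second requires reading off, for the universal block $\va$ that is consistent with both literals' partial assignments $\vc_i,\vc_j$, whether $\psi[(\vx,\vy)/(\va,\vb)]=\bot$ for the relevant sign pattern $\vb$ encoded by the two literals' signs. Since $\exp(\Psi)$ only places a clause $C_{\va,\vb}$ when $\psi$ is falsified, the circuit must incorporate a copy of $\psi$; building this copy and wiring the projected bits into it is $O(|\psi|)$ work per existential pair, and ranging over the $O(k^2)$ ordered pairs $(i,j)$ of existential indices yields the claimed $O(k^2|\psi|)$ bound.

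The main obstacle I anticipate is the bookkeeping around how a single edge of the implication graph of $\cA(\exp(\Psi))$ is traced back through the projection to a constant-size set of positions in $\exp(\Psi)$, and from there to the matrix $\psi$. Concretely, I must argue that although $\exp(\Psi)$ is exponentially large, the projection $\cA$ lets me compute the edge relation of its implication graph \emph{locally}: given the names of two literals, I determine in polynomial time whether the corresponding edge is present by (i) recovering which original clause of $\exp(\Psi)$ the projection's rule points to, and (ii) checking that this clause is actually generated, i.e., that $\psi$ is falsified by the matching $(\va,\vb)$. Step (ii) is where the matrix must be consulted, and the subtlety is that the universal assignment $\va$ is only partially determined by $\vc_i$ and $\vc_j$ when $\vz_i\cup\vz_j\neq\vx$; I would handle this by observing that the circuit $C$ need not fix the free universal bits, because in $\Phi$ those bits are themselves universally quantified by $\forall\vx\forall\vx'$, so the edge condition is read off uniformly for each full assignment to $\vx,\vx'$. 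Once this local computability is pinned down, the construction of $C$, its size analysis, and the invocation of Lemma~\ref{lem:reduction-dqbf-2dqbf} (applied to both $2$-CNF outputs $F_1,F_2$ of the splitting to produce $\Phi_1,\Phi_2$) are routine, yielding $\sharp\Psi=\sharp\Phi_1-\sharp\Phi_2$ with the stated running time.
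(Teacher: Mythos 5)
Your overall strategy---instantiate \Cref{lem:reduction-dqbf-2dqbf} explicitly by building the circuit $C$ for the implication graph of the 2-CNF produced by the splitting of \cite{bannach_p_2025} applied to $\exp(\Psi)$---is a legitimate way to make \Cref{lem:reduction-2-dqbf} constructive, and it is not the route the paper takes: the paper's proof never builds the succinct graph circuit, but instead writes down a DQBF directly at the formula level, introducing new existential variables $c,p,o1,o2,e1,e2$ (with dependency sets indexed by the clause space $\bbB^n\times\bbB^k$) that mirror the auxiliary variables of the propositional splitting, and then gives a separate parsimonious reduction from the resulting ``extended 2-DQBF'' (matrix a conjunction of pieces each using at most two existentials) to a genuine 2-DQBF by merging all existentials into one indexed function. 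Your route, if carried out, buys a more uniform argument (everything is delegated to \Cref{lem:reduction-dqbf-2dqbf}); the paper's route buys an explicit formula whose expansion is visibly the propositional construction, which is what makes the $O(k^2|\psi|)$ bound easy to read off.

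However, your sketch has a concrete gap: you describe the 2-CNF output of the splitting as if its clauses directly pair up literals over the original expansion variables $X_{i,\vc}$, so that an edge of the implication graph is named by two pairs $(i,\vc_i)$, $(j,\vc_j)$ of existential indices and partial assignments, and you then spend effort on the ``obstacle'' that $\vc_i$ and $\vc_j$ only partially determine the universal assignment $\va$ needed to evaluate $\psi$. This misreads the structure of the reduction of \cite{bannach_p_2025}. That reduction introduces auxiliary variables---a clause-indicator $c_{\va,\vb}$ for every clause index, plus parity-chain variables $p$, $o1$, $o2$, $e1$, $e2$---and \emph{every} 2-clause it produces involves at least one auxiliary variable; there are no clauses relating $X_{i,\cdot}$ to $X_{j,\cdot}$ directly. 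Consequently the variable-naming scheme you fix (index plus dependency assignment plus sign) does not even cover the variable set of the 2-CNF whose implication graph $C$ must encode, the $O(k^2)$ ``pairs of existential indices'' do not correspond to the actual edge types, and the anticipated obstacle does not arise: the only place $\psi$ must be consulted is in clauses of type $c_{\va,\vb}\to\neg\ell$, where the name of $c_{\va,\vb}$ already carries the \emph{full} assignment $(\va,\vb)$. Worse, your proposed resolution---leave the undetermined bits of $\va$ to the universal quantifiers $\forall\vx\,\forall\vx'$ of $\Phi$---would be incorrect if the problem did arise: in \Cref{lem:reduction-dqbf-2dqbf} the vectors $\vx,\vx'$ are literal \emph{names}, and padding them with extra universal bits either splits one variable of the 2-CNF into many (changing the model count) or turns the edge test into an existential condition over extensions of $\vc_i,\vc_j$, which cannot be absorbed into the outer universal quantification of the biconditional $\beta$. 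To repair the proof you need to write down the full variable set of the split formula (original plus auxiliary variables), give each a polynomial-size name, and define $C$ by cases on the types of the two endpoints; the copy of $\psi$ is then wired only into the cases involving $c_{\va,\vb}$, and the size accounting should be redone on that basis.
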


Using \Cref{lem:reduction-2-dqbf}
or \Cref{lem:explicit-reduction-2-dqbf},
we obtain the following theorem.

\begin{theorem}\label{theo:2-dqbf-shexpt-hard}
$\sharp$2-DQBF is $\shexpt$-complete.
\end{theorem}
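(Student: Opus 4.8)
The plan is to establish the two directions separately: membership in $\shexpt$ and $\shexpt$-hardness, the latter by reusing the machinery already assembled in \Cref{theo:3dqbf} and \Cref{lem:reduction-2-dqbf}.

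For membership, I would argue exactly as for $\sharp 3$-DQBF. Given a 2-DQBF $\Phi$, its expansion $\exp(\Phi)$ is a 2-CNF over the variables $X_{i,\vc}$ with $i\in\{1,2\}$ and $\vc\in\bbB^{|\vz_i|}$, of which there are at most $2\cdot 2^{n}$, i.e.\ exponentially many in $|\Phi|$. By the expansion correspondence recalled in \Cref{sec:prelim}, $\sharp\Phi$ equals the number of satisfying assignments of $\exp(\Phi)$. A nondeterministic machine can therefore guess a truth value for each $X_{i,\vc}$ (exponentially many bits) and then verify, by iterating over all $(\va,\vb)\in\bbB^{n}\times\bbB^{2}$ with $\psi[(\vx,\vy)/(\va,\vb)]=\bot$, that every clause $C_{\va,\vb}$ is satisfied. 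Both the guessing and the verification run in exponential time, and the number of accepting runs is exactly $\sharp\Phi$; hence $\sharp 2$-DQBF $\in\shexpt$.

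For hardness, the idea is to compose the parsimonious reduction underlying \Cref{theo:3dqbf} with the reduction of \Cref{lem:reduction-2-dqbf}. Fix any $G\in\shexpt$, given by an exponential-time nondeterministic machine $M$. The parsimonious reduction establishing \Cref{theo:3dqbf} maps accepting runs of $M$ bijectively to solutions of a 3-DQBF $\Psi$, computable in polynomial time, so that $\sharp\Psi=G(w)$. Feeding $\Psi$ into \Cref{lem:reduction-2-dqbf} yields two 2-DQBFs $\Phi_1,\Phi_2$, again in polynomial time, with $\sharp\Psi=\sharp\Phi_1-\sharp\Phi_2$. Chaining these gives $G(w)=\sharp\Phi_1-\sharp\Phi_2$, so the overall map outputs the pair $(\Phi_1,\Phi_2)$ and reports the value through the polynomial $p(s_1,s_2)=s_1-s_2$. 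This is precisely a poly-monious reduction from $G$ to $\sharp 2$-DQBF, establishing $\shexpt$-hardness; together with membership this proves the theorem. (\Cref{lem:explicit-reduction-2-dqbf} could be used in place of \Cref{lem:reduction-2-dqbf} to make the reduction constructive and near-linear, but this is not needed for the complexity claim.)

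The only point needing care is that this composition is legitimately poly-monious. Two small checks are required: first, that $p(s_1,s_2)=s_1-s_2$ qualifies as the permitted polynomial---it does, since the definition allows arbitrary integer-coefficient polynomials provided the output lies in $\bbN$, and here $\sharp\Phi_1-\sharp\Phi_2=\sharp\Psi\ge 0$ is guaranteed; second, that poly-monious reductions compose, so that the parsimonious step ($p(s)=s$) followed by the subtraction step yields a single poly-monious reduction whose number of output instances is fixed independently of $w$. I expect this bookkeeping---confirming that the composition stays within the poly-monious framework rather than sliding into a full polynomial-time Turing reduction---to be the main (though minor) obstacle; the genuinely hard work has already been discharged in \Cref{theo:3dqbf} and \Cref{lem:reduction-2-dqbf}.
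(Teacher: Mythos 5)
Your proposal is correct and follows essentially the same route as the paper: membership via the exponential-size 2-CNF expansion, and hardness by composing the parsimonious $\nexpt$-to-DQBF reduction behind \Cref{theo:3dqbf} with \Cref{lem:reduction-2-dqbf}, yielding a poly-monious reduction with $p(s_1,s_2)=s_1-s_2$. The paper's own proof is just the one-line invocation of \Cref{lem:reduction-2-dqbf} (or \Cref{lem:explicit-reduction-2-dqbf}); your write-up supplies the same argument with the bookkeeping made explicit.
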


We can also show that every $k$-DQBF can be reduced parsimoniously to a uniform $k$-DQBF, which gives us the following corollary.

\begin{corollary}\label{cor:2dqbf-shexpt-complete}
For every $k\geq 2$,
$\sharp$k-DQBF is
$\shexpt$-complete, even when restricted to uniform $k$-DQBF.
\end{corollary}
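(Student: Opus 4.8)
The plan is to prove membership and hardness separately, bootstrapping from the case $k=2$ of \Cref{theo:2-dqbf-shexpt-hard}. Membership is routine and uniform in $k$: a nondeterministic exponential-time machine, on input a $k$-DQBF $\Psi$, guesses the truth tables of candidate Skolem functions $f_1,\dots,f_k$ (each of size $2^{|\vz_i|}$, hence at most exponential in $|\Psi|$), then verifies $(f_1,\dots,f_k)\models\Psi$ by iterating over all $2^{|\vx|}$ universal assignments and evaluating the matrix. Its accepting runs are in bijection with the models of $\Psi$, so $\sharp k$-DQBF$\,\in\shexpt$ for every $k$, and a fortiori so is its restriction to uniform instances.

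For hardness I would reduce from the $k=2$ case. Since a uniform restriction is only harder, it suffices to show that the \emph{uniform} problem is $\shexpt$-hard for each $k\ge 2$. I would do this in two parsimonious steps. First note that the $2$-DQBFs produced in \Cref{theo:2-dqbf-shexpt-hard} are already uniform: they are built by \Cref{lem:reduction-dqbf-2dqbf}, whose constraint $\alpha:=(\vx=\vx')\to(y_1=y_2)$ forces the two Skolem functions to coincide. Hence uniform $\sharp 2$-DQBF is already $\shexpt$-hard. It then remains to lift uniform hardness from $2$ to arbitrary $k$, which I would obtain from the parsimonious reduction hinted at before the corollary: every $k$-DQBF reduces parsimoniously to a uniform $k$-DQBF.

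The substantive construction is this reduction. Given $\Psi=\forall\vx\,\exists y_1(\vz_1)\cdots\exists y_k(\vz_k)\,\psi$, a uniform model is a single function $g$, whereas a model of $\Psi$ is a tuple $(f_1,\dots,f_k)$ of possibly distinct functions with possibly distinct dependency sets, so $g$ must encode the whole tuple while the count is preserved exactly. I would introduce a selector of $\ell=\lceil\log_2 k\rceil$ fresh universal bits together with $k$ disjoint copies $(\vs_1,\vx_1),\dots,(\vs_k,\vx_k)$ of $(\text{selector},\vx)$, let each existential variable $y_j$ depend on $(\vs_j,\vx_j)$, and intend $g(\enc(i),\va)$ to equal $f_i$ read off the $\vz_i$-coordinates of $\va$. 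The matrix of the output $\Phi$ is a conjunction of four guarded implications: (i) a uniformity gadget forcing $y_j\leftrightarrow y_{j'}$ whenever the two copies carry equal inputs (the $\alpha$-trick of \Cref{lem:reduction-dqbf-2dqbf}), which makes $\Phi$ uniform with common function $g$; (ii) an ``ignore'' gadget stating that when copies $1,2$ both carry selector $\enc(i)$ and agree on the $\vz_i$-coordinates then $y_1\leftrightarrow y_2$, so $g(\enc(i),\cdot)$ factors through the projection to $\vz_i$; (iii) an ``invalid selector'' gadget forcing $g(\vt,\cdot)=\bot$ for every $\vt\in\bbB^{\ell}$ not encoding an index; and (iv) a simulation gadget stating that when the copies carry selectors $\enc(1),\dots,\enc(k)$ and share a common $\vx$-value, the original $\psi$ must hold on the read-off values $y_1,\dots,y_k$. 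All guards are polynomial-size and there are only $O(k)$ invalid selectors, so $\Phi$ is built in polynomial time; composing with the padding and \Cref{theo:2-dqbf-shexpt-hard} then yields the corollary.

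The main obstacle is establishing that this is \emph{exactly} count-preserving, i.e.\ parsimonious rather than merely a Turing-style ``$g$ codes the tuple'' argument, and gadgets (ii)--(iii) are precisely what secure this. Gadget (ii) guarantees that $g(\enc(i),\cdot)$ determines a single function $f_i\colon\bbB^{|\vz_i|}\to\bbB$ and contributes exactly the freedom of $f_i$ and no more; gadget (iii) removes all freedom on non-index selectors; and gadget (iv), quantified over all shared $\vx$-values, holds as a tautology iff $(f_1,\dots,f_k)\models\Psi$. Together they make $g\mapsto(f_1,\dots,f_k)$ a bijection onto the models of $\Psi$, so $\sharp\Phi=\sharp\Psi$. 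Verifying the bijection carefully—checking that no spurious Skolem function is introduced by the padding of the domain to $\bbB^{\ell+|\vx|}$—is where the real work lies; the membership argument and the constant-pinning padding used to move from $2$ to general $k$ are routine by comparison.
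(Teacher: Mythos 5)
Your proposal is correct and follows essentially the same route as the paper: the core content is the parsimonious reduction from a general $k$-DQBF to a uniform one via a $\lceil\log_2 k\rceil$-bit selector, $k$ copies of $\vx$, and exactly the four conjuncts (uniformity, projection-to-$\vz_i$, invalid-selector pinning, and simulation) that the paper uses in its appendix, together with the same bijection argument on Skolem functions. The only cosmetic differences are that you pin invalid selectors to $\bot$ rather than $\top$ and that you additionally observe the instances from \Cref{theo:2-dqbf-shexpt-hard} are already uniform, neither of which changes the argument.
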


\section{First-order Model Counting (FOMC)}\label{sec:fomc}

In this section, we show a tight connection between $\sharp$DQBF and FOMC.
Recall that FOMC is defined as given FO sentence $\Psi$ and a number $N$ in unary, compute the number of models of $\Psi$ with domain $\{1,\ldots,N\}$.
We denote by $\fomcbin$ when the number $N$ is given in binary.

It is implicit in~\cite{dqbf-fmcad22} that $\fomcbin$ can be reduced to $\sharp$DQBF and that the reduction is parsimonious.
We will describe the idea here with an example.
Consider the well-known smoker-friend example for Markov Logic Networks~\cite{mln06}:
\begin{align*}
\Psi \ := \ & 
\forall u \forall v  
\ \ \stress(u)\to \smoke(u)
\\
& \qquad \land \
\friend(u,v)\land \smoke(u) \to \smoke(v)
\end{align*}
For every $n$, we will show to construct a DQBF $\Phi_n$ such that $\sharp\Phi_n$ is exactly the number of models of $\Psi$ of size $2^n$.

The idea is to represent each of the predicates $\stress$, $\smoke$, and $\friend$ with a Skolem function.
We have $2n$ universal variables $\vx_1,\vx_2$ in $\Phi_n$.
The first block of $n$ variables $\vx_1$ corresponds to $u$
and the second block $\vx_2$ corresponds to $v$.
It has $4$ existential variables, $y_1,y_2,y_3,y_4$,
corresponding to $4$ atoms $\stress(u)$,
$\smoke(u)$, $\friend(u,v)$ and $\smoke(v)$.
The dependency sets are 
$\vx_1$, $\vx_1$, $\vx_1\cup\vx_2$ and $\vx_2$, respectively.
The matrix of $\Phi_n$ is obtained by replacing each atom in $\Psi$ with its corresponding existential variable.
Formally,
\begin{align*}
\Phi_n\ :=\ & \forall \vx_1 \forall \vx_2 \ 
\exists y_1(\vx_1) \exists y_2(\vx_1)
\exists y_3(\vx_1,\vx_2) \exists y_4(\vx_2)
\ \phi,
\end{align*}
where 
\begin{align*}
\phi  :=  & \left(y_1 \to y_2\right)
\land
\left(y_3\land y_2 \to y_4\right)
\land
\left(\vx_1 = \vx_2 \to y_2=y_4\right).
\end{align*}
The first two conjuncts correspond to the quantifier-free parts in $\Psi$.
The last conjunct states that $y_2$ and $y_4$ must be the same function, since they are intended to represent the same predicate $\smoke$.
It is not difficult to show that $\sharp\Phi_n$ is precisely the number of models of $\Psi$ with size $2^n$.

Next, we show that FOMC is already hard even when the domain size is fixed to $2$.

\begin{theorem}\label{theo:fomcbin-hard}
FOMC is $\shexpt$-hard even when the domain size is fixed to $2$.
\end{theorem}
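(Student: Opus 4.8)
The plan is to give a poly-monious reduction to FOMC with the domain fixed to $\{1,2\}$, starting from an $\shexpt$-complete counting problem; since poly-monious reductions compose, it suffices to reduce from $\sharp$DQBF, which is $\shexpt$-hard by \Cref{theo:2-dqbf-shexpt-hard} (or already \Cref{theo:3dqbf}). The key observation is that over a two-element domain an $m$-ary relation is precisely a Boolean function on $m$ bits: it has exactly $2^{2^m}$ interpretations, matching the number of Skolem functions for an existential variable whose dependency set has size $m$. So the double-exponentially many Skolem functions of a DQBF are recovered by the double-exponentially many high-arity relations available on a fixed two-element domain, and I can arrange for a model of the FO sentence to literally \emph{be} a tuple of Skolem functions.

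Concretely, given $\Psi=\forall\vx\,\exists y_1(\vz_1)\cdots\exists y_k(\vz_k)\,\psi$ with $\vx=(x_1,\dots,x_n)$, I would introduce one relation symbol $R_j$ of arity $|\vz_j|$ for each $y_j$, together with a single constant $c$ that names the ``true'' element. The output is the purely universal sentence
\[
\Phi \ :=\ \forall u_1\cdots\forall u_n\ \hat\psi,
\]
where $\hat\psi$ results from the matrix $\psi$ by replacing each universal variable $x_i$ with the atom $(u_i=c)$ and each existential variable $y_j$ with $R_j(u_{i_1},\dots,u_{i_{m_j}})$ for $\vz_j=(x_{i_1},\dots,x_{i_{m_j}})$, leaving the connectives unchanged. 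This is computable in time linear in $|\Psi|$, and the emitted FOMC instance is the pair $(\Phi,2)$, so the domain is fixed to $\{1,2\}$.

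For correctness I would fix the interpretation $t$ of $c$ and identify the element $t$ with $\top$ and the other element with $\bot$. Under this identification each interpretation of $R_j$ corresponds bijectively to a Boolean function $f_j$ on $|\vz_j|$ bits, and $\Phi$ holds in the structure iff for every $\va\in\bbB^n$ the matrix $\psi$ is $\top$ under the assignment $\vx/\va$ and $y_j/f_j(\p{\va}{\vx}{\vz_j})$---that is, iff $(f_1,\dots,f_k)\models\Psi$. Hence for each of the two possible values of $t$ there are exactly $\sharp\Psi$ satisfying structures, giving $\sharp\Phi=2\,\sharp\Psi$. This yields a poly-monious reduction with the single output $(\Phi,2)$ and polynomial $p(s)=s/2$, which is well defined because $\sharp\Phi$ is always even.

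The step I expect to require the most care is the exact bookkeeping of the domain's symmetry. Since $\Phi$ mentions domain elements only through the constant $c$, swapping $1$ and $2$ is an automorphism carrying the models with $c=1$ bijectively onto those with $c=2$; this is precisely what pins the relationship down to the clean factor $\sharp\Phi=2\,\sharp\Psi$ and justifies taking $p(s)=s/2$ rather than leaving an uncontrolled relationship. I would also verify that, for each fixed value of $c$, the map from interpretations of the $R_j$ to Skolem tuples is a genuine bijection, so that structures are neither double-counted nor missed. As a bonus, $\Phi$ lies in the universal fragment of FO, which is what underlies the stronger claim that the hardness persists over a $\ps$-decidable base logic.
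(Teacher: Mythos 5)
Your reduction is correct and is essentially the paper's own construction: identify the two domain elements with $\top$ and $\bot$ (the paper names the ``true'' element via $\exists u_0\exists u_1$ and a unary predicate $U$ rather than a constant $c$), encode each Skolem function as a relation whose arity equals the size of the dependency set, and absorb the swap symmetry of the two-element domain into the factor $\sharp\Phi = 2\,\sharp\Psi$, exactly matching the paper's ``half the number of models'' accounting and its use of $p(s)=s/2$ as the poly-monious polynomial. The only substantive divergence is the source problem: you reduce from general $\sharp$DQBF with one relation symbol per existential variable, whereas the paper deliberately reduces from \emph{uniform} $2$-DQBF (via \Cref{cor:2dqbf-shexpt-complete}) so that a single relation symbol $S$ occurring in only two atoms suffices; this is immaterial for \Cref{theo:fomcbin-hard} itself, but your closing ``bonus'' remark is wrong --- the full universal (Bernays--Sch\"onfinkel--Ramsey) fragment with unbounded-arity relations has $\nexpt$-complete satisfiability, and the $\ps$-decidability claimed in \Cref{cor:pspace-decidable} comes precisely from restricting to two $S$-atoms, which your $k$-relation construction does not provide.
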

\begin{proof}
The reduction is from uniform $2$-DQBF,
which is $\shexpt$-hard, by \Cref{cor:2dqbf-shexpt-complete}.
We fix a uniform $2$-DQBF 
$\Phi:=\forall\vx \exists y_1(I)\exists y_2(J) \phi$,
where $\vx =(x_1,\ldots,x_n)$,
$I= \{i_1,\ldots,i_m\}$
and $J = \{j_1,\ldots,j_m\}$.
Let $S$ be a predicate symbol with arity $m$ and $U$ be a unary predicate.
Define the FO sentence
$\Psi := \exists u_0\exists u_1
\forall v_1 \cdots\forall v_n\
U(u_1)\land \neg U(u_0)\land \psi$,
where $\psi$ is the formula obtained from $\phi$ by replacing: (i)
each $x_i$ in $\phi$ with $U(v_i)$
for every $1\leq i\leq n$;
and (ii) $y_1$ and $y_2$ with $S(v_{i_1},\ldots,v_{i_m})$ and
$S(v_{j_1},\ldots,v_{j_m})$, respectively.
The intention is that the 
Boolean values $\top$ and $\bot$ are represented with
membership in the predicate $U$.
A Skolem function $f:\bbB^{m}\to\bbB$ is represented by the relation $S$.
We can show that $\sharp\Phi$ is half the number of models of $\Psi$ with domain $\{1,2\}$.
\end{proof}

We can show that the logic required for $\shexpt$-hardness has satisfiability problem decidable in $\ps$,
which gives us the following corollary.

\begin{corollary}\label{cor:pspace-decidable}
There is a fragment $\cL$ of FO
of which the satisfiability problem is in $\ps$, but its corresponding FOMC is $\shexpt$-complete even when the domain size is restricted to $2$.
\end{corollary}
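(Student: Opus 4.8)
The plan is to exhibit $\cL$ as the syntactic fragment consisting exactly of the sentences produced by the reduction in \Cref{theo:fomcbin-hard}: sentences of the form
\[
\Psi \ =\ \exists u_0\exists u_1\,\forall v_1\cdots\forall v_n\ \ U(u_1)\land \neg U(u_0)\land \psi,
\]
where $U$ is a unary predicate and $\psi$ is quantifier-free over $U$ together with a single relation symbol $S$ of some arity $m$ that occurs only in two fixed argument patterns $S(v_{i_1},\dots,v_{i_m})$ and $S(v_{j_1},\dots,v_{j_m})$. With this definition the two $\shexpt$-related claims are almost immediate. Hardness of the corresponding FOMC with domain size $2$ is inherited directly from \Cref{theo:fomcbin-hard}, since every sentence produced by that reduction already lies in $\cL$. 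Membership in $\shexpt$ holds for the FOMC of any FO sentence over a fixed domain: a nondeterministic exponential-time machine guesses interpretations of $S$ and $U$ on $\{1,2\}$ (the table for $S$ has $2^m$ entries, exponential in the input) and accepts iff the resulting structure is a model, so that the number of accepting runs equals the number of models. Hence FOMC restricted to $\cL$ with domain size $2$ is $\shexpt$-complete.

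The substantive part is the $\ps$ upper bound for the satisfiability of $\cL$. First I would establish a small-model property using that $\Psi$ is a Bernays--Sch\"onfinkel--Ramsey sentence with exactly two leading existential quantifiers. Given any model $M\models\Psi$ with witnesses $w_0,w_1$ for $u_0,u_1$, the conjuncts $U(u_1)\land\neg U(u_0)$ force $w_0\neq w_1$; restricting $M$ to the two-element set $\{w_0,w_1\}$ preserves all $U$- and $S$-atoms over these elements, and since $\forall v_1\cdots\forall v_n\,\psi$ held over the whole domain it in particular holds for all tuples in $\{w_0,w_1\}^n$. Thus $\Psi$ is satisfiable iff it has a model of size exactly $2$.

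It then remains to decide satisfiability over the fixed domain $\{1,2\}$ in $\ps$. Here I would invoke the correspondence underlying \Cref{theo:fomcbin-hard} in the reverse direction: the existential conjuncts force $U$ to be a singleton, and fixing one such choice (the other being symmetric) induces a bijection $\{1,2\}\leftrightarrow\bbB$ under which an interpretation of $S$ becomes a Boolean function $f:\bbB^m\to\bbB$ and an assignment to $v_1,\dots,v_n$ becomes $\va\in\bbB^n$. Under this identification, the statement ``there is an $S$ with $\forall v_1\cdots\forall v_n\,\psi$'' is precisely the satisfiability of the uniform $2$-DQBF $\forall\vx\,\exists y_1(I)\exists y_2(J)\,\phi$ underlying $\Psi$, whose $I,J,\phi$ can be read off from $\psi$ in polynomial time. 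Since this is an instance of $2$-DQBF satisfiability, which is $\ps$-complete~\cite{dqbf-sat23}, it can be decided in polynomial space; combined with the small-model property, satisfiability of $\cL$ is in $\ps$.

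The main obstacle I anticipate is not any single calculation but pinning down $\cL$ tightly enough that domain-$2$ satisfiability lands in $2$-DQBF rather than $3$-DQBF: the restriction to at most two distinct argument patterns for $S$ is exactly what keeps the reconstructed DQBF in the $\ps$-decidable class, and must be baked into the definition of $\cL$. The remaining pieces---the small-model argument and the two $\shexpt$ bounds---are routine once this correspondence is in place.
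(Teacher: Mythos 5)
Your proposal follows the same two-step strategy as the paper's proof: use the Bernays--Sch\"onfinkel small-model property to bound the domain, then decide the residual ``does a suitable $S$ exist'' question by reducing it to $2$-DQBF satisfiability, which is in $\ps$. The paper chooses a broader fragment (BSR sentences with $k$ leading existentials, arbitrarily many unary predicates, and at most two $S$-atoms), so its algorithm nondeterministically guesses the model size $s\le k$, the witnesses, and the unary predicates before building the $2$-DQBF; your narrower fragment---exactly the image of the reduction in \Cref{theo:fomcbin-hard}---lets you fix the domain to $\{1,2\}$ outright and avoid the guessing, which is a legitimate simplification since the corollary only asserts the existence of \emph{some} fragment. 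One step of your argument needs more care, however: ``there is an $S$ with $\forall v_1\cdots\forall v_n\,\psi$'' asks for a \emph{single} function substituted into both $S$-atoms, which is satisfiability of the $2$-DQBF \emph{with a shared Skolem function}, not plain satisfiability of $\forall\vx\,\exists y_1(I)\exists y_2(J)\,\phi$, where $y_1$ and $y_2$ may be realised by different functions; handing the latter to a $2$-DQBF decision procedure could report satisfiable when no single $S$ exists. The paper handles this by conjoining an explicit uniformity constraint of the form $(\vx|_I=\vx|_J)\to(y_1=y_2)$; the cleanest repair is to note that the shared-function problem is itself a succinctly represented $2$-CNF over variables $X_{\vc}$, $\vc\in\bbB^m$, and hence still decidable in $\ps$ by the same implication-graph argument. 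With that adjustment your proof goes through and is essentially the paper's argument specialised to a smaller fragment.
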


\section{Algorithm for $\shtwodqbf$}\label{sec:compute}

In this section, we present an algorithm for $\sharp 2$-DQBF that builds on recent advances in 2-DQBF satisfiability checking using symbolic reachability~\cite{dqbf-fmcad24}.
The key idea is to interpret the matrix of a $2$-DQBF as a succinct encoding of the implication graph induced by its expansion. Our algorithm symbolically decomposes this graph into its weakly connected components and computes the model count by processing each component independently.

We fix the input $2$-DQBF 
$\Phi  \coloneq  \forall \vx \exists y_1(\vz_1) \exists y_2(\vz_2)\; \varphi$.
Instead of computing $\sharp\Phi$ directly, 
we will compute $\sharp\exp(\Phi)$.
There is a difference because 
a variable $X_{i,\vc}$ may not even occur 
in $\exp(\Phi)$, indicating that the Skolem function of $y_i$ is completely 
unconstrained at assignment~$\vc$.
We call a variable $X_{i,\vc}$ a \emph{support} variable if it appears in $\exp(\Phi)$; otherwise, it is called \emph{non-support}.
Since non-support variables can be assigned arbitrarily,
it is sufficient to compute the number of solutions that assign non-support variables to a fixed value, say, $\bot$.
We call such solutions \emph{essential solutions}.

\paragraph{Counting non-support variables.} $\sharp\Phi$ can be recovered from the number of essential solutions by multiplying it with $2^{m}$, where $m$ is the number of non-support variables.
The set of support/non-support variables can be characterised with Boolean formulas as follows.
\begin{lemma}\label{lem:support-var}
Let $\cS_1 \coloneq \{\vc : \neg \varphi[\vz_1/\vc] \ \text{is satisfiable}\}$
and $\cS_2 \coloneq \{\vc : \neg \varphi[\vz_2/\vc] \ \text{is satisfiable}\}$.
The set of support variables in $\exp(\Phi)$
is $\{X_{1,\vc} : \vc\in \cS_1\}\cup \{X_{2,\vc} : \vc\in \cS_2\}$.
Moreover, the number of support and non-support variables is $|\cS_1|+|\cS_2|$ and  $(2^{|\vz_1|}-|\cS_1|)+(2^{|\vz_2|}-|\cS_2|)$, respectively.
\end{lemma}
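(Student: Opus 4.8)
The plan is to reduce the entire lemma to a single biconditional: for $i\in\{1,2\}$ and $\vc\in\bbB^{|\vz_i|}$, the variable $X_{i,\vc}$ occurs in $\exp(\Phi)$ if and only if $\vc\in\cS_i$. Once this is established, the support set is exactly $\{X_{1,\vc}:\vc\in\cS_1\}\cup\{X_{2,\vc}:\vc\in\cS_2\}$, and both counting claims follow immediately. To set up the biconditional I would first unwind the definition of the expansion. Inspecting a clause $C_{\va,\vb}=X^{\neg b_1}_{1,\vc_1}\vee X^{\neg b_2}_{2,\vc_2}$ with $\vc_i=\p{\va}{\vx}{\vz_i}$, the only literal in $C_{\va,\vb}$ whose variable carries first index $i$ is $X^{\neg b_i}_{i,\vc_i}$. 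Hence $X_{i,\vc}$ occurs in $\exp(\Phi)$ precisely when there is an assignment $(\va,\vb)$ to $(\vx,\vy)$ with $\varphi[(\vx,\vy)/(\va,\vb)]=\bot$ and $\p{\va}{\vx}{\vz_i}=\vc$. This is the working characterization I would then match against the definition of $\cS_i$.

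For the forward direction I would assume $X_{i,\vc}$ is a support variable, so it appears in some clause $C_{\va,\vb}$ of $\exp(\Phi)$; then $\varphi$ is falsified by $(\va,\vb)$ and $\p{\va}{\vx}{\vz_i}=\vc$. Restricting this assignment, the induced assignment to the remaining variables $(\vx\setminus\vz_i)\cup\{y_1,y_2\}$ falsifies $\varphi[\vz_i/\vc]$, witnessing that $\neg\varphi[\vz_i/\vc]$ is satisfiable, i.e.\ $\vc\in\cS_i$. For the converse, if $\vc\in\cS_i$ then there is an assignment to $(\vx\setminus\vz_i)\cup\{y_1,y_2\}$ under which $\varphi[\vz_i/\vc]$ evaluates to $\bot$; extending it by $\vz_i\mapsto\vc$ yields a full $(\va,\vb)$ with $\varphi[(\vx,\vy)/(\va,\vb)]=\bot$ and $\p{\va}{\vx}{\vz_i}=\vc$, so by the definition of $\exp(\Phi)$ the clause $C_{\va,\vb}$ is present and contains the literal $X^{\neg b_i}_{i,\vc}$. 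Thus $X_{i,\vc}$ is a support variable, which closes the biconditional.

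The counting statements are then pure bookkeeping. Since the variables $X_{1,\cdot}$ and $X_{2,\cdot}$ have distinct first index, they are disjoint, so the number of support variables is $|\cS_1|+|\cS_2|$ with no double counting. The total number of expansion variables is $2^{|\vz_1|}+2^{|\vz_2|}$, namely one $X_{i,\vc}$ for each $i\in\{1,2\}$ and each $\vc\in\bbB^{|\vz_i|}$, so the count of non-support variables is the difference $(2^{|\vz_1|}-|\cS_1|)+(2^{|\vz_2|}-|\cS_2|)$.

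I do not expect a deep obstacle here; the proof is essentially a careful unfolding of the expansion definition. The one point that needs attention is the equivalence between ``some full assignment $(\va,\vb)$ extending $\vz_i\mapsto\vc$ falsifies the matrix'' and ``$\neg\varphi[\vz_i/\vc]$ is satisfiable'': this is the standard correspondence between substituting a partial assignment and restricting a full one, but it must be stated cleanly so that the projection $\p{\va}{\vx}{\vz_i}$ is correctly identified with the substituted block $\vz_i\mapsto\vc$. A secondary point is ensuring that it is \emph{occurrence of the variable}, and not a particular polarity, that the notion of support tracks, which is why only the index $i$ and the argument $\vc$---not the sign $\neg b_i$---matter for membership in the support set.
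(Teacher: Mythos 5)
Your proof is correct and follows essentially the same route as the paper's: both unwind the definition of $\exp(\Phi)$ to get the biconditional ``$X_{i,\vc}$ occurs in some clause $C_{\va,\vb}$ iff there is a falsifying assignment $(\va,\vb)$ of $\varphi$ with $\p{\va}{\vx}{\vz_i}=\vc$, iff $\neg\varphi[\vz_i/\vc]$ is satisfiable,'' and then read off the counts. The paper's version is just a terser statement of the same equivalences; your explicit treatment of the two directions and of the polarity-irrelevance point is a faithful (and slightly more careful) elaboration.
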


Given a BDD for the negated matrix~$\neg \varphi$, Lemma~\ref{lem:support-var} can be used to efficiently compute the number of support variables $|\cS_i|$ by projecting out variables not in $\vz_i$ and counting satisfying assignments.

\paragraph{Overview of the algorithm.}
In the following,
let $\gphi$ be the implication graph of $\exp(\Phi)$.
Let $\tgphi$ be the undirected graph obtained from $\gphi$ by ignoring the edge orientation and
adding an edge between a literal and its negation, for every literal in $\exp(\Phi)$.
The connected components of $\tgphi$ correspond to a partition of the clauses in $\exp(\Phi)$ where no two components share common variables.

High-level pseudocode is shown as~\Cref{algo:main}.
First, using the reduction in~\cite{dqbf-fmcad24},
we convert $\Phi$ to a transition system $(I,T)$,
where $I$ is the formula for the initial states and 
$T$ is the formula for the transition relation.
A brief summary of this transformation can be found in the appendix.
From $(I,T)$, we can deduce whether $\Phi$ is satisfiable by constructing a formula $\varphi_{tr}$ that represents the transitive closure of $\gphi$ via BDD-based reachability.
If it is not satisfiable, 
the algorithm immediately returns $0$.

Now, suppose $\Phi$ is satisfiable.
From $\varphi_{tr}$, we can also construct the formula for $\tgphi$.
\Cref{algo:main} iterates through every connected component $C$ in $\tgphi$ that contains only the support variables.
In each iteration, it computes $N_C$,
the number of assignments on the variables in $C$ that respect the implications in $C$.
For example, if there is an edge $\ell_1\to\ell_2$ in $\gphi$,
when $\ell_1$ is assigned to $\top$,
$\ell_2$ must also be assigned to $\top$.
If there are $k$ connected components $C_1,C_2,\ldots,C_k$ (that contains only support variables),
then the number of essential solutions is the product $\prod_{1\leq i \leq k} N_{C_i}$,
since no two components share the same variable.

\begin{algorithm}[t]\caption{Count the number of essential solutions for $\Phi$}
\label{algo:main}{\footnotesize
\begin{algorithmic}[1]
\State
Transform $\Phi$ to a symbolic reachability instance $(I,T)$ using the transformation in ~\cite{dqbf-fmcad24}
\If {$\Phi$ is unsatisfiable}
\State \textbf{return} 0
\EndIf
\State
$R\gets$ the set of all support variables
\State
$N \gets 1$
\While{$R\neq \emptyset$}
\State
Pick an arbitrary variable $X_{i,\vc}$ from $R$
\State
$C\gets$ the connected component in $\tgphi$ that contain $X_{i,\vc}$
\State
$N_C \gets$ the number of assignments on the variables in $C$
\Statex
\hspace{2cm}that respect the implications in~$C$
\State
Remove all the variables in $C$ from $R$
\State
$N \gets N \times N_C$
\EndWhile
\State
\textbf{return} $N$
\end{algorithmic}}
\end{algorithm}

\paragraph{Counting over a component.}
This is the most technical part of the algorithm.
We start with the following lemma on efficient model counting for $1$-DQBF.

\begin{lemma}\label{lem:count-one-dqbf-all}
Let $\Upsilon\coloneq\forall \vu \exists y(\vv) \ \varphi$
be a satisfiable 1-DQBF.
\begin{itemize}
\item
The number of Skolem functions for $\Upsilon$ is $2^{m}$, where $m=2^{\abs{\vv}} - |\{\vc : \neg\varphi[\vv/\vc] \ \text{is satisfiable}\}|$. 
\item 
In particular, for a set $S \subseteq \bbB^{\abs{\vv}}$, the number of Skolem functions for $\Upsilon$ that differ on $S$ is $2^m$, where $m = {\abs{S}} - |\{\vc : \neg\varphi[\vv/\vc] \land (\vc\in S) \ \text{is satisfiable}\}|$.
\end{itemize}
\end{lemma}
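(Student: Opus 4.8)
The plan is to exploit the fact that, with a single existential variable $y$, a Skolem function is just a map $f\colon \bbB^{\abs{\vv}}\to\bbB$, and the constraint imposed on the value $f(\vc)$ involves only those universal assignments $\va$ whose projection $\p{\va}{\vu}{\vv}$ equals $\vc$. Hence the admissibility of $f$ decomposes into independent local conditions, one per $\vc\in\bbB^{\abs{\vv}}$, and the total count factors as a product over $\vc$. I would phrase this through the expansion $\exp(\Upsilon)$, which by the preliminaries is, for a $1$-DQBF, a conjunction of \emph{unit} clauses over the variables $\{X_{\vc}\}_{\vc\in\bbB^{\abs{\vv}}}$, with the correspondence $X_{\vc}=f(\vc)$.

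First I would characterise, for each $\vc$, how many values $f(\vc)$ are admissible. A unit clause on $X_{\vc}$ is generated exactly when some full assignment $\va$ extending $\vc$ together with some $b\in\bbB$ falsifies the matrix, i.e. precisely when $\neg\varphi[\vv/\vc]$ is satisfiable; writing $\mathcal{N}=\{\vc:\neg\varphi[\vv/\vc]\text{ is satisfiable}\}$, the set $\mathcal{N}$ is the set of $\vc$ on which a unit clause occurs. For $\vc\notin\mathcal{N}$ no clause constrains $X_{\vc}$, so both values are admissible ($2$ choices). For $\vc\in\mathcal{N}$ at least one value is forbidden; the key point is to rule out that \emph{both} values are forbidden, which I do by invoking satisfiability of $\Upsilon$: since the local conditions are independent, a model of $\Upsilon$ must supply an admissible value at every $\vc$, so exactly one of the two values survives ($1$ choice). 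Equivalently, $X_{\vc}$ cannot carry both a positive and a negative unit clause without $\exp(\Upsilon)$ being unsatisfiable.

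Multiplying the per-$\vc$ counts gives $\sharp\Upsilon = 2^{\,\abs{\{\vc:\vc\notin\mathcal{N}\}}} = 2^{\,2^{\abs{\vv}}-\abs{\mathcal{N}}}$, which is the first bullet with $m=2^{\abs{\vv}}-\abs{\mathcal{N}}$. For the second bullet I would read ``Skolem functions that differ on $S$'' as the number of distinct restrictions $f|_S$ as $f$ ranges over models of $\Upsilon$. Because the choices at distinct $\vc$ are independent, the restriction $f|_S$ is free exactly at the $\vc\in S\setminus\mathcal{N}$ and forced at the $\vc\in S\cap\mathcal{N}$; hence the number of such restrictions is $2^{\abs{S}-\abs{S\cap\mathcal{N}}}$, and $\abs{S\cap\mathcal{N}}=\abs{\{\vc:\neg\varphi[\vv/\vc]\land(\vc\in S)\text{ is satisfiable}\}}$, matching the claimed $m$.

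The main obstacle is the ``exactly one, not zero or two'' dichotomy for $\vc\in\mathcal{N}$: a single satisfying assignment of $\neg\varphi[\vv/\vc]$ only certifies that one value of $y$ is bad, so I must combine this with the global hypothesis that $\Upsilon$ is satisfiable---together with the independence of the per-$\vc$ constraints---to conclude the other value is admissible. A secondary point requiring care is pinning down the intended reading of ``differ on $S$'' as counting distinct restrictions to $S$, and checking that independence indeed lets the restriction vary freely on $S\setminus\mathcal{N}$ while being determined on $S\cap\mathcal{N}$.
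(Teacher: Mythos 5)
Your proof is correct and follows essentially the same route as the paper: both view $\exp(\Upsilon)$ as a conjunction of unit clauses, observe that satisfiability forces a unique value on each support variable $X_{\vc}$ with $\vc\in\mathcal{N}$ while the non-support variables remain free, and obtain the count $2^{2^{\abs{\vv}}-\abs{\mathcal{N}}}$, with the second bullet following by restricting the same per-$\vc$ analysis to $S$. Your treatment is somewhat more explicit than the paper's (in particular the ``exactly one, not zero'' dichotomy and the reading of ``differ on $S$'' as counting distinct restrictions), but there is no substantive difference.
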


\Cref{lem:count-one-dqbf-all} tells us
that if we have a candidate Skolem function $f$ for $y_1$,
by substituting $y_1$ with $f$, we obtain a $1$-DQBF instance $\Phi'$
of which the number of Skolem functions restricted to a component can be computed efficiently
via \Cref{lem:count-one-dqbf-all}.
We perform this for every candidate Skolem function for $y_1$
to compute the number $N_C$.

The main challenge is to enumerate all possible Skolem functions for $y_1$.
To do so, we combine the candidate Skolem function enumeration technique in~\cite{pedant21}
and the Skolem function extraction in~\cite{dqbf-fmcad24}.
Suppose we already have a list of Skolem functions $F = \{f^{(1)},\ldots,f^{(t)}\}$ for $y_1$, where each
$f^{(i)}$ is given as a Boolean formula.
To find a Skolem function different from all functions in this list, it must differ from each $f^{(i)}$ at some $\vv_i$. Let $A$ be a Boolean formula, over variables $\vv_1,\ldots,\vv_t$ where each $|\vv_i|=|\vz_1|$, maintained throughout the enumeration process.
We will use $A$ to represent the assignments we can choose to differ from each $f^{(i)}$. 
The intuition is that if $\ttM$ is a satisfying assignment for $A$, then we want to find a function $f$ with $f(\ttM(\vv_i)) = \neg f^{(i)}(\ttM(\vv_i))$ for every
$1\leq i \leq t$.

How can we construct the Boolean formula that defines the function $f$?
Here we employ the technique from~\cite{dqbf-fmcad24}.
First, we ``force'' the variable $X_{1,\ttM(\vv_i)}$ to be assigned with $\neg f^{(i)}(\ttM(\vv_i))$
by adding the edge 
\[
X_{1,\ttM(\vv_i)}^{f^{(i)}(\ttM(\vv_i))} \to
X_{1,\ttM(\vv_i)}^{\neg f^{(i)}(\ttM(\vv_i))}
\] 
into the transition relation $T$, for every $1\leq i \leq t$.
We then check whether in the transition relation $T$ there is a cycle that contains contradicting literals.
If there is such a cycle, we move to the next satisfying assignment of $A$ by ``blocking'' the assignment $\ttM$ in $A$.
If there is no such cycle,
we extract the function $f$ for $y_1$ by employing the technique from~\cite{dqbf-fmcad24},
add $f$ into $F$, and update $A$ by conjoining it with $C_1[\vz_1 / \vv_{t+1}]$, where $\vv_{t+1}$ are fresh variables, and $C_1$ is a Boolean formula extracted from $C$ that specifies only the literals associated with $y_1$.

Without additional constraints, we would enumerate many satisfying assignments $\ttM$ of $A$ which do not lead to a Skolem function. For instance, if $\ttM(\vv_i) = \ttM(\vv_j)= \va$, but $f_i(\va) \neq f_j(\va)$, there clearly is no function $f_{t+1}$ such that both $f_{t+1}(\va) \neq f_i(\va)$ and $f_{t+1}(\va) \neq f_j(\va)$.
Such cases, and many more, can be excluded by conjoining~$A$ with{\footnotesize
\begin{align}
\label{eq:update-A}
 \
\bigwedge_{1\leq i \leq t} 
\neg \varphi_{tr}\left(X^{\neg f^{(t+1)}(\vv_{t+1})}_{1, \vv_{t+1}}, X^{f^{(j)}(\vv_{j})}_{1, \vv_{j}} \right).
\end{align}}\!\!
Recall that $\varphi_{tr}$ is a formula that represents the edges of the transitive closure of the implication graph. The formula $\varphi_{tr}\left(X^{\neg f^{(i)}(\vz^{(i)})}_{1, \vz^{(i)}}, X^{f^{(j)}(\vz^{(j)})}_{1, \vz^{(j)}}\right)$ represents the formula obtained by substituting the variable representing the two literals in $\varphi_{tr}$ with $X^{\neg f^{(i)}(\vz^{(i)})}_{1, \vz^{(i)}}, X^{f^{(j)}(\vz^{(j)})}_{1, \vz^{(j)}}$.

The intended meaning of Eq.~\eqref{eq:update-A} is as follows.
The conjunct $C_1[\vz_1 / \vv_{t+1}]$ states that
the place $\vv$ where the next Skolem function differs from $f$ must be in component $C_1$.
Each conjunct $\neg \varphi_{tr}\left(X^{\neg f^{(i)}(\vz^{(i)})}_{1, \vz^{(i)}}, X^{f^{(j)}(\vz^{(j)})}_{1, \vz^{(j)}}\right)$ ensures that the edge 
$X^{\neg f^{(i)}(\vz^{(i)})}_{1, \vz^{(i)}}\to X^{ f^{(j)}(\vz^{(j)})}_{1, \vz^{(j)}}$ is \emph{not}
in the transitive closure of $G_{\Phi}$.
Otherwise, if such an edge is present,
when we force $X_{1, \ttM(\vv_i)}$ to be $\neg f^{(i)}(\ttM(\vv_i))$ and $X_{1, \ttM(\vv_j)}$ to be $\neg f^{(j)}(\ttM(\vv_j))$, we will find a bad cycle and there will be no solutions.

\begin{algorithm}[t]
\caption{Counting $N_C$}\label{algo:count-on-component}
{\footnotesize
\begin{algorithmic}[1]
\State 
Let $C_1,C_2$ be the literals in $C$ associated with $y_1,y_2$, resp.
\State $F,A,n \gets \emptyset,\top,0$
\State $T'\gets T$\Comment{$T$ is the transition relation constructed from $\Phi$}
\While{$A$ is satisfiable}
\State Let $\texttt{M}$ be a satisfying assignment of $A$
        \LineComment{Force the candidate to be different from the previous ones}
            \State $T' \gets T' \land \Call{ForceAssignment}{\texttt{M},F}$
        \LineComment{Check if such assignments lead to no Skolem functions}
        \State $E' \gets \Call{ComputeReachable}{E, T'}$
        \If {$\Call{CheckBadCycle}{E'}$}
            \State $A \gets A \land \Call{BlockAssignment}{\texttt{M}}$
            \State \textbf{continue}
        \EndIf
        \LineComment{Count the number of Skolem functions and update $A$}
        \State $f \gets \Call{ComputeValidCandidate}{T'}$
        \State $n \gets n + \Call{Count1DQBFOnComponent}{f}$
        \State $F \gets F \cup \{f\}$
        \State $\Call{Update}{A}$
    \EndWhile
    \State \textbf{return} $n$\Comment{$n$ is the number $N_C$}
\end{algorithmic}}
\end{algorithm}

We present the algorithm to compute $N_C$
formally as Algorithm~2.
We first split $C$ into two sets $C_1$ and $C_2$
that contain the literals associated with $y_1$ and $y_2$, respectively.

In Line~8, we force the assignment by adding into $T$ the edge 
$X_{1,\ttM(\vv_i)}^{f^{(i)}(\ttM(\vv_i))} \to
X_{1,\ttM(\vv_i)}^{\neg f^{(i)}(\ttM(\vv_i))}$, for every $1\leq i \leq t$.
In Line~9 we run the command \tool{reach} from \tool{ABC}
to check whether there is a cycle containing contradicting literals.
If there is such a cycle, the assignment $\ttM$ is blocked in Line~11.
In Line~13 we extract a Skolem function using the technique from~\cite{dqbf-fmcad24}.
In Line~14, we count the number of solutions for the 1-DQBF instance after substituting $y_1$ with the new Skolem function $f$.
Finally, in Line~16 we update the formula $A$ by conjoining it with the conjunction in Eq.~\eqref{eq:update-A}.

\section{Experiments}\label{sec:experiment}

\begin{figure*}[t!]
\centering
\includegraphics[scale=.31]{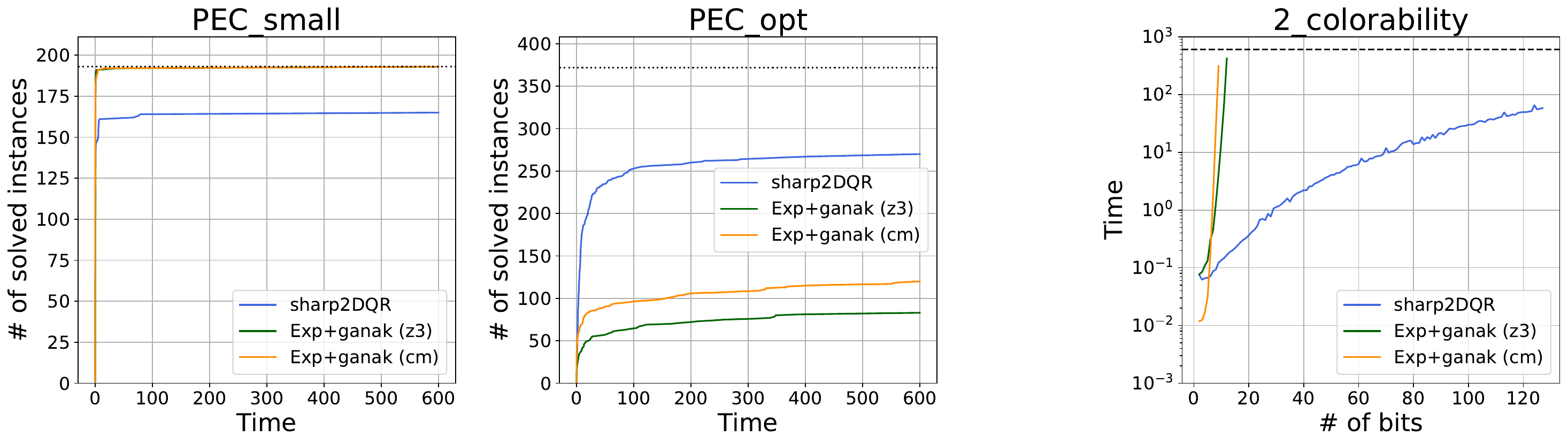}%
\caption{The two figures on the left show the performance of the solvers on the PEC instances, the horizontal axis corresponds to the running time (s), and the vertical
axis to the number of solved instances. The figure on the right shows the performance of the solvers on the 2-colorability instances; the horizontal axis corresponds to the number of bits of the graph in the instance, and the vertical axis to the running time (s).}
\label{fig:pec}
\end{figure*}

We implemented the algorithm from the previous section in a tool called \tool{sharp2DQR}.
Formulas such as $R$, $G_{\Phi}$, $\varphi_{tr}$, $\cS_1$ and $\cS_2$ are represented as BDDs using \tool{cudd}~\cite{somenzi2009cudd}.
This offers several advantages.
For example, the formula
$\varphi_{tr}$
can be computed using BDD-based reachability as implemented in ABC's \texttt{reach} command~\cite{ABC}. 
The number of support/non-support variables can also be computed easily
by constructing the BDD for $\cS_1$ and $\cS_2$ from $\neg\varphi$ with existential quantification.

To evaluate our model counter, we generated a diverse family of benchmarks, which we divided into three batches of instances.
\begin{itemize}
\item 
PEC\_opt: These are instances with a dependency set size of 10 to 50. They are generated in a similar manner as in~\cite{dqbf-fmcad24}.
There are 370 instances in this batch.
One third of these have $0$ non-support variables.
\item 
PEC\_small: These are instances generated from the ISCAS89 instances with a dependency set size of 3 to 10 variables. 
There are 192 instances in this batch.
\item 
2\_colorability: These are the 2 colorability instances as in~\cite{dqbf-fmcad24}. 
These are succinctly represented graphs with 2 to 127 bits, each of them contains exactly two Skolem functions.
\end{itemize}
For PEC\_opt and PEC\_small, the number of Skolem functions ranges from 1 to more than $2^{2^{64}}$
and the number of connected components ranges from 1 to more than 1600.

We evaluated the performance of \tool{sharp2DQR}
against \tool{ganak}~\cite{ganak}, which is used to count the number of models of the expansion as defined in Eq.~\eqref{eq:exp}. The expansion is computed via \tool{cryptominisat5}~\cite{cryptominisat}  or \tool{z3}~\cite{z3} as follows: First, we obtain a model $\texttt{M}$ of $\neg\varphi$. Then, we generate the corresponding clauses in the expansion. Then, we add a blocking clause $\land_{x \in \vz_1 \cup \vz_2 \cup \{y_1, y_2\}} x \neq \texttt{M}[x]$ to $\neg\varphi$ to prevent duplicated solutions and repeat until the formula becomes unsatisfiable. This method is called \tool{Exp+ganak}.

The experiments were conducted on Ubuntu 22.04.4 LTS with 48 GB of 2400MHz DDR4 memory and an i5-13400 CPU.
Each solver had 600 seconds to solve each instance. 

\Cref{fig:pec} shows that
\tool{sharp2DQR} fell short on PEC\_small instances, but it is significantly better than \tool{Exp+ganak} on PEC\_opt instances. This is because the dependency set size is larger on PEC\_opt instances, and since the expansion size is exponential to the dependency set size, our method, without expanding, performs better on larger instances. 
Most of the time spent by \tool{Exp+ganak} is used on computing the expansion, and on many instances \tool{ganak} finishes counting quite quickly after the expansion.
\tool{sharp2DQR} does not work well on small instances because sometimes BDD operations take too long, and on the PEC\_small instances, the number of unsatisfying models is small enough that enumeration is not too much of a problem.
We also notice that for \tool{Exp+ganak}, the performance of using \tool{cryptominisat5} and \tool{z3} is similar, but \tool{cryptominisat5} is better on the PEC\_opt instances.

For 2\_colorability, \tool{Exp+ganak} was unable to solve any instances larger than $12$ bits, while \tool{sharp2DQR} successfully solved instances up to $127$ bits. 
This is due to the fact that the number of clauses in the expansion is $\Theta(2^n)$ for an $n$-bit graph, making full expansion very expensive.

More experimental results and analysis can be found in the appendix.
We also compared both \tool{Exp+ganak}
and \tool{sharp2DQR} against the latest FOMC tool \tool{WFOMC}~\cite{wfomc_tool}, 
where we encode counting the number of 2-colorings and independent sets
on some specific graphs.
In all instances, \tool{WFOMC} can only handle model sizes of up to $4$, far lower than what \tool{sharp2DQR} can handle, which in some instances is $2^{127}$.
However, in the independent set counting instances, \tool{Exp+ganak} performs better than \tool{sharp2DQR}.

\section{Concluding Remarks}\label{sec:conclusion}

We established that $\shtwodqbf$ is as hard as general $\sharp$DQBF. 
Specifically, we proved that it
is $\shexpt$-complete by leveraging the connections between $k$-DQBF and $k$-SAT~\cite{dqbf-sat23} and the technique in~\cite{bannach_p_2025}.

On the experimental front, we introduced a novel algorithm for $\shtwodqbf$ using BDD-based symbolic reachability.
As a baseline, we also implemented an approach that relies on universal expansion followed by propositional model counting. While our algorithm scaled better with larger dependency sets, the expansion-based method works for general DQBF and may be worth exploring further.

To the best of our knowledge, this is the first paper investigating model counting for DQBF, and there are many avenues for future research.
One natural next step is to generalize our algorithm to handle 3-DQBF.

\subsection{Acknowledgements}
This paper is the extended version of the conference paper with same title~\cite{us-aaai-26}.
We thank the AAAI'26 reviewers for their detailed and constructive comments on the initial version.

We would like to acknowledge the generous support of the Royal Society International Exchange Grant
no.\@ R3\textbackslash{}233183,
the National Science and Technology Council of Taiwan grant no.\@ 111-2923-E-002-013-MY3 and 114-2221-E-002-183-MY3,
and the NTU Center of Data Intelligence: Technologies, Applications, and Systems 
grant no.\@ NTU-113L900903.

\bibliography{references}

\appendix
\onecolumn

\section*{Appendix: More detailed proofs and experimental results\\
``Model Counting for Dependency Quantified Boolean Formulas''}

\noindent
\underline{\hspace{\textwidth}}

\vspace{-.35cm}
\noindent
\underline{\hspace{\textwidth}}

\bigskip
\medskip
\noindent
In this appendix we present some of the missing proofs in the main body.
We also present more experimental results and the statistics.
All logarithms have base $2$.

\section{Parsimonious reduction from general DQBF to uniform DQBF}\label{app:sec:dqbf-uniform-dqbf}

Recall that a $k$-DQBF $\Phi$ is uniform,
if for every Skolem functions $(f_1,\ldots,f_k)\models \Phi$,
all the functions $f_1,\ldots,f_k$ must be the same function.
Here we will show the polynomial time parsimonious reduction from general DQBF to uniform DQBF.
The intuition is that we can combine 
$k$ functions $f_1,\ldots,f_k$ into one function $f$.
Suppose $f_i : \bbB^{n_i}\to \bbB$, for each $1\leq i \leq k$.
We can combine it into one function $f:\bbB^{\log k}\times \bbB^{n}\to \bbB$,
where $n$ is the maximal among $n_1,\ldots,n_k$.
The first $\log k$ bits can be used as the index of the function.

We now give the formal construction.
We fix an arbitrary $k$-DQBF:
\begin{align*}
\Psi &\ :=\ 
\forall \vx\
\exists y_1(\vz_1) \cdots \ \exists y_k(\vz_k) 
\ \ \psi
\end{align*}
where $\vx=(x_1,\ldots,x_n)$, each $\vz_i\subseteq \vx$.

We construct the following uniform $k$-DQBF:
\[
\Psi'=\forall \vx_1,\dots,\forall \vx_k,\forall \vu_1,\dots,\forall \vu_k,\exists y_1(\vx_1\cup \vu_1),\dots,\exists y_k(\vx_k\cup \vu_k).\,\psi'\,,
\]
where each $\vx_i$ is a copy of $\vx$,
each $\vu_i$ is of $\lceil\log k\rceil$ bits encoding a number in $[2^{\lceil\log k\rceil}]$,
and $\psi'$ is the conjunction of the following:
\begin{itemize}
\item 
$((\vx_i=\vx_j)\wedge (\vu_i=\vu_j))\to (y_i\leftrightarrow y_j)$ for every $1\leq i<j\leq k$,
\item 
$((\vx_1|_{\vz_i}=\vx_2|_{\vz_i})\wedge (\vu_1=\vu_2=i))\to (y_1\leftrightarrow y_2)$ for every $i\in[k]$,
\item 
$\Paren*{\bigwedge_{i\in[k]}\vu_i=i}\to \psi$, and
\item 
$(\vu_i=j)\to y_i$ for every $i\in[k]$ and $k+1\leq j\leq 2^{\lceil\log k\rceil}$.
\end{itemize}
The first item ensures that the DQBF is uniform.
The second ensures that the dependencies of the original existential variables are respected.
The third encodes the original formula.
The last enforces that there is a unique Skolem function for the unused indices in order to preserve the number of Skolem functions.

For every Skolem function $(f_1, \dots, f_k)$, consider the function: 
\[
f(i, \vx) = \begin{cases}
f_i(\vx\vert_{\vz}) & \text{ if } i \leq k\\
1 & \text{otherwise}
\end{cases}
\]
which forms a Skolem function for $\Psi'$ by copying itself $k$ times. 
For any Skolem function of $\Psi'$, it must be $k$ copies of a function $f$.
Consider the function:
\[
f_i(\vz_i) = f(i, \text{Ext}(\vz_i, \vx))
\]
where $\text{Ext}(\vz_i, \vx)$ lifts $\vz_i$ to $\vx$ by substituting the variables not in $\vz_i$ to $0$. Then $(f_1, \dots, f_k)$ is a Skolem function for $\Psi$. It is clear that this is a bijection.

\section{Proof of~\Cref{lem:explicit-reduction-2-dqbf}: An explicit reduction from $\sharp$DQBF to $\sharp$2-DQBF}

In this section we present the proof of~\Cref{lem:explicit-reduction-2-dqbf}:
\begin{quote}{\em
There is a reduction that transforms a DQBF $\Psi$ to two $2$-DQBF $\Phi_1$ and $\Phi_2$ such that
$\sharp\Psi = \sharp\Phi_1 - \sharp\Phi_2$.
The reduction runs in time $O(k^2|\psi|)$, where $k$ is the number of existential variables in $\Psi$ and $\psi$ is the matrix of $\Psi$.}
\end{quote}
We will first briefly review the reduction in~\cite{bannach_p_2025}.
For the sake of presentation, we will present only the simplified version, which is already sufficient for our purpose.
It should also be clear from our presentation that the reduction is indeed projection as required in the proof of~\Cref{lem:reduction-2-dqbf}.
We then show how to lift the reduction to the DQBF setting.
We will first show the reduction to a slightly generalized version of 2-DQBF that we call \emph{extended 2-DQBF}.
Then, we present a parsimonious reduction from extended 2-DQBF to 2-DQBF.

\subsection{The reduction from $\sharp$SAT to $\sharp$2-SAT:}

Given a CNF $\varphi$ over the variables $\vx=(x_1,\ldots,x_n)$: 
\[
\varphi := \bigwedge_{i=1}^m C_i,\quad\text{where each}\ C_i = \ell_{i, 1} \lor \dots \lor \ell_{i,k_i}
\]
We construct a 2-CNF forula $\phi$ over the variables $\vx\cup\{c_1,\ldots,c_m\}\cup\{p_i, o1_i, o2_i, e1_i, e2_i\}_{i=0,\ldots,m}$.
The intended meaning of each variable is as follows.
\begin{itemize}
\item 
Variable $c_i$ is the indicator whether clause $C_i$ is falsified:
It is true only if clause $C_i$ is falsified.
\item 
Variable $p_i$ denotes the parity of the number of clauses falsified up to clause $i$.
\item 
Variables $o1_i$, $o2_i$, $e1_i$, and $e2_i$ are used to represent different cases of the parity of the number of clauses falsified up to clause $i$. 
\end{itemize}
The formula $\phi$ is constructed to reflect the intended meaning of the variables.
Formally, it is the conjunction of the following clauses:
\begin{description}
\item[(R1)]
For each $i = 1,\dots,m$, and every literal $\ell_{i, j}$, we have the clause: 
\begin{equation}
c_i \to \neg \ell_{i, j} \label{eq:clause}
\end{equation}
That is, if clause $C_i$ is falsified, then all its literals must be falsified.
\item[(R2)]
For each $i = 1,\dots,m$, we have the following four groups of clauses:
\begin{itemize}
\item 
The number of clauses falsified up to clause $i$ is odd because clause $i$ is falsified.
\begin{equation}
o1_i \to \neg p_{i - 1} \quad\quad o1_i \to c_i \quad\quad o1_i \to p_i \label{eq:odd1}
\end{equation}
\item 
The number of clauses falsified up to clause $i$ is odd because clause $i$ is not falsified. 
\begin{equation}
o2_i \to p_{i - 1} \quad\quad o2_i \to \neg c_i \quad\quad o2_i \to p_i \label{eq:odd2}
\end{equation}
\item 
The number of clauses falsified up to clause $i$ is even because clause $i$ is falsified.
\begin{equation}
e1_i \to p_{i - 1} \quad\quad e1_i \to c_i \quad\quad e1_i \to \neg p_i \label{eq:even1} 
\end{equation}
\item 
The number of clauses falsified up to clause $i$ is even because clause $i$ is not falsified.
\begin{equation}
e2_i \to \neg p_{i - 1} \quad\quad e2_i \to \neg c_i \quad\quad e2_i \to \neg p_i \label{eq:even2}
\end{equation}
\end{itemize}
\item[(R3)]
Finally we have the clause:
\begin{equation}
\neg p_0 \label{eq:initial}
\end{equation}
Intuitively, this means that 
initially there is zero clause falsified.
\end{description}
In~\cite{bannach_p_2025}, it is shown that:
$$
\#(\varphi) \ =\ \#(\phi \land \neg p_m) - \#(\phi \land p_m)
$$
The proof is essentially the inclusion-exclusion principle, where we count the number of falsifying assignments.

\subsection{The reduction from $\sharp$DQBF to $\sharp2$-DQBF:}

We will show how to reduce DQBF to \emph{extended 2-DQBF} defined as follows.

\begin{definition}
A DQBF is an \emph{extended 2-DQBF} if its matrix is a conjunction $\bigwedge_i \varphi_i$, where each $\varphi_i$ uses at most two existential variables.
\end{definition}

In the following we will show the desired reduction from DQBF to extended 2-DQBF which suffices for our purpose thanks to the following lemma.

\begin{lemma}
There is a parsimonious polynomial time reduction that transforms extended 2-DQBF instances to 2-DQBF instances.
\end{lemma}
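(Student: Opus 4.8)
The plan is to show that the only obstacle in converting an extended 2-DQBF into a genuine 2-DQBF is that a single conjunct $\varphi_i$ may reference two \emph{distinct} existential variables, say $y_a$ and $y_b$, with incomparable dependency sets, whereas a 2-DQBF allows exactly two existential variables globally. So the goal is to merge all the existential variables into just two existential variables in a way that preserves the model count exactly. First I would introduce two fresh existential variables $y_1, y_2$ whose dependency sets are large enough to encode any of the original $y_i$: concretely, augment the universal block with $\lceil\log k\rceil$ fresh selector variables $\vu$ (and possibly $\vu'$), and let each of $y_1, y_2$ depend on $\vx \cup \vu$ (resp.\ $\vx\cup\vu'$), so that by fixing the selectors to the binary encoding of an index $i$, the Skolem function of $y_1$ restricted to that slice plays the role of the original Skolem function $f_i$. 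This is the same ``index the function by its first $\lceil\log k\rceil$ bits'' trick already used in the parsimonious reduction from general DQBF to uniform DQBF in \Cref{app:sec:dqbf-uniform-dqbf}.

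The key steps, in order, would be the following. First, for each conjunct $\varphi_i$ that mentions the two existential variables $y_a$ and $y_b$, rewrite it so that $y_a$ is replaced by $y_1$ evaluated at the selector value encoding $a$ and $y_b$ is replaced by $y_2$ evaluated at the selector value encoding $b$; to make this legal within a single quantifier-free matrix, I would guard each rewritten conjunct by an implication of the form $(\vu = a \wedge \vu' = b) \to \varphi_i[\,y_a/y_1,\ y_b/y_2\,]$, so that the conjunct is only ``active'' on the slice of the universal space selecting the correct pair of indices. Second, I would add consistency (uniformity-style) clauses ensuring that $y_1$ and $y_2$ represent the \emph{same} underlying family of functions, i.e.\ that on the slice selecting index $i$ both agree with a single function $f_i$ depending only on $\vz_i$; these are of the form $(\vu=\vu' \wedge \text{appropriate projection equality}) \to (y_1 \leftrightarrow y_2)$, mirroring the $\alpha$-constraint used in \Cref{lem:reduction-dqbf-2dqbf} and the consistency clauses in \Cref{app:sec:dqbf-uniform-dqbf}. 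Third, I would pin down the Skolem functions on the unused selector values $j > k$ (when $k$ is not a power of two) to a fixed constant, exactly as in the last item of the uniform-DQBF construction, so that these spurious slices contribute a multiplicative factor of $1$ rather than inflating the count.

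Finally I would argue the correspondence is a bijection between models of the extended 2-DQBF and models of the constructed 2-DQBF. In one direction, given Skolem functions $f_1,\dots,f_k$ for the original existential variables, I assemble the two combined functions $g_1, g_2$ by setting $g_1(\vx,\vu) = f_i(\restr{\vx}{\vz_i})$ when $\vu$ encodes $i \le k$ (and the constant on unused indices), and symmetrically for $g_2$; the consistency clauses are satisfied by construction and each guarded conjunct reduces to the original $\varphi_i$ on its active slice. In the other direction, any model of the 2-DQBF must, by the consistency clauses, have $g_1$ and $g_2$ agree with a single family $f_1,\dots,f_k$, and the projection-equality guards force each $f_i$ to genuinely depend only on $\vz_i$, so we recover a model of the extended 2-DQBF. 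Since the map is clearly injective and surjective, it is parsimonious. The main obstacle I anticipate is the book-keeping in the consistency and projection constraints: because the original $y_i$ may have incomparable dependency sets $\vz_i$, one has to be careful that the equality guard compares exactly the $\vz_i$-projections of the universal assignments (not the full $\vx$), so that $g_1$ restricted to a slice really is forced to factor through a function of $\vz_i$ alone; getting these guards to simultaneously enforce uniformity \emph{and} the correct dependency structure, without accidentally over-constraining or under-constraining the count, is the delicate part. The remaining claim that the whole construction is a projection (needed so that \Cref{lem:reduction-dqbf-2dqbf} applies) is routine, since every clause is produced by a local, index-computable rewriting of the input.
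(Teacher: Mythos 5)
Your overall strategy is the same as the paper's: merge the $k$ existential variables into two by adding selector bits to the universal block, guard each original conjunct $\varphi_{k,\ell}$ by the selector values $(k,\ell)$, and add consistency clauses forcing the two combined functions to agree and to respect the original dependency sets. However, there is a concrete flaw in how you set up the dependency sets, and it sits exactly at the point you yourself flag as ``the delicate part.'' You let $y_1$ depend on $\vx\cup\vu$ and $y_2$ on $\vx\cup\vu'$, i.e.\ the two existential variables share the \emph{same} copy of $\vx$ and differ only in the selector. With that choice, the constraint you need --- that on the slice selecting index $i$ the combined function factors through $\vz_i$ alone --- is not expressible: to force $g(\va,i)=g(\vb,i)$ whenever $\va|_{\vz_i}=\vb|_{\vz_i}$, you must evaluate the two existential variables at two \emph{different} assignments to the universal block within a single universal assignment of the new formula, which requires $y_2$ to depend on a disjoint copy $\vx'$ of $\vx$. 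Without this, each slice of $g_1$ ranges over arbitrary functions of all of $\vx$ rather than functions of $\vz_i$ (the clause $(\vu=\vu'=i)\to(y_1=y_2)$ only forces $g_1$ and $g_2$ to coincide on that slice, not to factor through $\vz_i$), so the count is inflated by doubly exponential factors on under-constrained slices and the reduction is not parsimonious.

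The paper's construction resolves this by quantifying over $\vx,\vx',\bar{i},\bar{i}'$, giving $y$ dependency set $(\vx,\bar{i})$ and $y'$ dependency set $(\vx',\bar{i}')$, writing the consistency clause as $(\bar{i}=\bar{i}'=k \wedge \vx|_{\vz_k}=\vx'|_{\vz_k})\to(y=y')$ --- which simultaneously enforces that $y$ and $y'$ represent the same family of functions and that the $k$-th member factors through $\vz_k$ --- and guarding the rewritten conjuncts with $\vx=\vx'$ so that $\varphi_{k,\ell}$ is imposed only where the two copies agree. Your remaining ingredients (pinning the unused selector values to a constant so they contribute a multiplicative factor of $1$, and the back-and-forth bijection argument) are correct and mirror \Cref{app:sec:dqbf-uniform-dqbf}; once the dependency sets are repaired as above, your argument goes through. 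Note also that for this particular lemma only parsimony and polynomial time are required; the projection property you worry about at the end is needed earlier in the chain (for \Cref{lem:reduction-dqbf-2dqbf}), not here.
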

\begin{proof}
The proof is rather similar to~\Cref{app:sec:dqbf-uniform-dqbf}, where
we combine several functions into one function.

Let the given extended $2$-DQBF be: 
\[
\forall \bar{x} \exists y_1(\bar{z}_1) \ldots \exists y_k(\bar{z}_k) \; \varphi,
\quad\text{where}
 \
\varphi := \bigwedge_{1 \leq i < j \leq m} \varphi(\bar{x}, \bar{y}_i, \bar{y}_j).
\]
Consider the 2-DQBF:
\[
\forall \bar{x} \forall \bar{x}' \forall \bar{i} \forall \bar{i}' \exists y(\bar{x}, \bar{i}) \exists y'(\bar{x}', \bar{i}'). \; \psi
\]
where 
\begin{align*}
\psi' = &\bigwedge_{k = 1}^m \left( (\bar{i} = \bar{i}' = k \land \bar{z}_k = \bar{z}_k') \to y = y'\right) \ \land\ \left(\bar{x} \neq \bar{x}' \lor \bigwedge_{1 \leq k < \ell \leq m}\left((\bar{i} = k \land \bar{i}' = \ell) \to \varphi_{k, \ell}[y_k / y, y_\ell/y']\right)\right)
\end{align*}

Intuitively, the function $y$ represents $y_k$ when $i = k$. When $i = i' = k$ and $k$ is between $1$ and $m$, the formula $(\bar{i} = \bar{i}' = k \land \bar{z}_k = \bar{z}_k') \to y = y'$ forces $y$ and $y'$ to be the same function, and is independent of variables outside $\bar{z}_k$. Since each $\varphi_{k, \ell}$ only contains two existential variables, $y_k$ and $y_\ell$, $(\bar{i} = k \land \bar{i}' = \ell) \to \varphi_{k, \ell}[y_k / y, y_\ell/y']$ ensures that $y_k$ and $y_\ell$ represented by $y$ satisfies the $\varphi_{k, \ell}$ when $\bar{x}= \bar{x}'$.

\end{proof}

We now give the reduction as required in \Cref{lem:explicit-reduction-2-dqbf}.
Let the input $k$-DQBF be:
\[
\Phi := \forall x_1 \cdots x_n \exists y_1(\bar{z}_1) \cdots y_k(\bar{z}_k) \;\varphi.\]
Let $\Lambda = \{0, 1\}^{n} \times \{0, 1\}^{k}$ and $\Lambda^* = \Lambda \cup \{-1\}$.
We will view elements in $\Lambda$ as integers between $0$ and $2^{n + k}-1$ (inclusive), and elements in $\Lambda^*$ as integers between $-1$ and $2^{n + k}-1$ (inclusive).
By encoding $-1$ with an additional bit and abuse of notation, we may treat $\Lambda^*$ as a subset of $\bbB^{n+k+1}$.

Consider the following DQBF:
\begin{align*}
\Psi :=\ &
\forall (\bar{x}, \bar{y}) \in \Lambda \; \forall \vt \in \Lambda^* \; \forall \vt' \in \Lambda^* \ \exists f_1(\bar{z}_1) \cdots \exists f_k(\bar{z}_k) \
\exists c(\bar{x}, \bar{y}) \
\exists p(\vt') \
\exists o1(\vt) \; \exists o2(\vt)\
\exists e1(\vt) \; \exists e2(\vt)\ \ \psi
\end{align*}
where $|\vx|=n$, $|\vy|=k$, $|\vt|=|\vt'|=n+k+1$
and the matrix $\psi$ is the conjunction of the following

\begin{description}
    \item[(S1)] \hfill

    \begin{itemize}
        \item 
        $\neg \varphi \to (y_i \to (c \to f_i))$ for each $i \in \{1, \dots, k\}$
        \item 
        $\neg \varphi \to (\neg y_i \to (c \to \neg f_i))$ for each $i \in \{1, \dots, k\}$ 
        \item 
        $\varphi \to \neg c$
    \end{itemize}

    \item[(S2)] \hfill

    \begin{itemize}
        \item 
        $(t \neq -1 \land t' = t - 1) \to (o1 \to \neg p)$
        \item 
        $(t \neq -1 \land t = (\bar{x}, \bar{y})) \to (o1 \to c)$
        \item 
        $(t \neq -1 \land t' = t) \to (o1 \to p)$
        \\
        \item 
        $(t \neq -1 \land t' = t - 1) \to (o2 \to p)$
        \item 
        $(t \neq -1 \land t = (\bar{x}, \bar{y})) \to (o2 \to \neg c)$
        \item 
        $(t \neq -1 \land t' = t) \to (o2 \to p)$
        \\
        \item 
        $(t \neq -1 \land t' = t - 1) \to (e1 \to p)$
        \item 
        $(t \neq -1 \land t = (\bar{x}, \bar{y})) \to (e1 \to c)$
        \item 
        $(t \neq -1 \land t' = t) \to (e1 \to \neg p)$
        \\
        \item 
        $(t \neq -1 \land t' = t - 1) \to (e2 \to \neg p)$
        \item 
        $(t \neq -1 \land t = (\bar{x}, \bar{y})) \to (e2 \to \neg c)$
        \item 
        $(t \neq -1 \land t' = t) \to (e2 \to \neg p)$
    \end{itemize}

    \item[(S3)] \hfill

    \begin{itemize}
        \item $(t' = -1) \to \neg p$
    \end{itemize}
\end{description}
The formulas in (S1) correspond to the clauses in (R1), (S2) to the clauses in (R2) and (S3) to the clause in (R3). 

Consider the following two 2-DQBF $\Psi_1$ and $\Psi_2$.
\begin{itemize}
\item 
$\Psi_1$ is obtained from $\Psi$ by conjuncting its matrix with 
$(t' = 2^{n + 3} - 1) \to \neg p$.
\item 
$\Psi_2$ is obtained from $\Psi$ by conjuncting its matrix with 
$(t' = 2^{n + 3} - 1) \to p$.
\end{itemize}
It can be shown that:
$$
\sharp\Phi = \sharp\Psi_1 - \sharp\Psi_2.
$$
The proof is similar to ~\cite{bannach_p_2025}. We observe that the expansion of $\Psi_1$ and $\Psi_2$ are essentially the same as in the previous subsection.
Note also that both $\Psi_1$ and $\Psi_2$ are extended 2-DQBF.

\section{Missing details in the proof of \Cref{theo:fomcbin-hard}}

We first recall the reduction in~\Cref{theo:fomcbin-hard}.
Suppose we are given a uniform $2$-DQBF: 
\[
\Phi:=\forall\vx \exists y_1(I_1)\exists y_2(I_2) \phi,
\]
where $\vx =(x_1,\ldots,x_n)$,
$I= \{i_1,\ldots,i_m\}$
and $J = \{j_1,\ldots,j_m\}$.
We construct the following FO sentence $\Psi$ using only one predicate symbol $S$ with arity $m$:
$$
\Psi\ :=\ \exists u_0\exists u_1
\forall v_1 \cdots\forall v_n\
(u_0\neq u_1)\land \psi
$$
where $\psi$ is the formula obtained from $\phi$ by replacing:
\begin{itemize}
\item
each $x_i$ in $\phi$ with $v_i=u_1$
for every $1\leq i\leq n$;
\item
$y_1$ and $y_2$ with $S(v_{i_1},\ldots,v_{i_m})$ and
$S(v_{j_1},\ldots,v_{j_m})$, resp.
\end{itemize}
The rest of this appendix is devoted to the proof that $\sharp\Phi$ is half the number of models of $\Psi$ with domain $\{1,2\}$.
To avoid being repetitive, in this section, the domain of first-order structures is assumed to be $\{1,2\}$.

We first show that swapping the roles of $1$ and $2$ in a structure does not effect the satisfiability of $\Psi$.
Suppose $\cA\models \Psi$.
Let $\cA^*$ be the structure obtained by swapping the roles of $1$ and $2$ in $\cA$.
We claim that $\cA\models \Psi$ iff $\cA^*\models \Psi$.
Indeed, if $\cA\models \Psi$,
by definition, there is an assignment to $u_0,u_1$ with the elements in $\{1,2\}$.
Due to $u_0\neq u_1$, the assignments must be different.
Suppose $u_0$ is assigned with $1$ and $u_2$ is assigned with $2$.
Then, $\cA^*$ also satisfies $\Psi$
by assigning $u_0$ with $2$ and $u_1$ with $1$.
That $\cA^*\models \Psi$ implies
$\cA\models \Psi$ is analogous.

Next, we show that
a Boolean function $f:\bbB^m\to\bbB$ corresponds uniquely to two structures $\cA_{1,f}$ and $\cA_{2,f}$.
We use the following notation.
For each $\va =(a_1,\ldots,a_m)\in \bbB^m$, define $\widetilde{\va}=(\widetilde{a}_1,\ldots,\widetilde{a}_m)\in\{1,2\}^m$, where for each $i\in \{1,\ldots,m\}$:
\[
\widetilde{a}_i \ := \ 
\left\{
\begin{array}{ll}
1  & \text{if}\ a_i =\top
\\
2  & \text{if}\ a_i =\bot
\end{array}
\right.
\]
For each function $f:\bbB^m\to\bbB$,
we define the structures $\cA_{1,f}$ where for each $\va\in \bbB^m$:
\[
f(\va)=\top
\ \ \text{if and only if}\ \
\widetilde{\va} \in S
\]
The structure $\cA_{2,f}$
is obtained by swapping the roles of $1$ and $2$ in $\cA_{1,f}$.
It is routine to verify that $(f,f)\models \Phi$ iff both $\cA_{1,f}$ and $\cA_{2,f}$ satisfy $\Psi$. 
This implies that $\sharp \Phi$ is half the number of models of $\Psi$.

\section{Proof of \Cref{cor:pspace-decidable}}

Recall \Cref{cor:pspace-decidable}:
\begin{quote}{\em
There is a fragment $\cL$ of FO
of which the satisfiability problem is in $\ps$, but its corresponding FOMC is $\shexpt$-complete even when the domain size is restricted to $2$.} 
\end{quote}
The desired logic $\cL$ is a subclass of Bernays-Schoenfinkel-Ramsey class with relation symbol $S$ (with varying arity) and arbitrary number of unary relation symbols.
It contains sentences of the form:
\begin{equation}
\label{eq:logic-pspace}
\Psi \ := \
\exists u_1 \cdots \exists u_k
\forall v_1 \cdots \forall v_n\
\psi
\end{equation}
where the number of atoms using the relation $S$ is limited to $2$.
This class already captures the sentence used in the proof of hardness (where the equality predicate $u_0\neq u_1$ is replaced with $U(u_1)\wedge \neg U(u_0)$).
Therefore its model counting is $\shexpt$-complete.
What is left is to show that this logic is decidable in polynomial space.

Let the input sentence be $\Psi$ as in Eq.~\eqref{eq:logic-pspace}.
We will use the well known fact that if $\Psi$ is satisfiable, then it is satisfiable by a model of size at most $k$~\cite{bgg97}.
Let $U_1,\ldots,U_t$ be the unary predicates used in $\Psi$.
Let the two atoms using the relation symbol $S$ be $S(v_{i_1},\ldots,v_{i_m})$
and 
$S(v_{j_1},\ldots,v_{j_m})$.
The (non-deterministic) polynomial space algorithm for deciding the satisfiability of $\Psi$ works as follows.
\begin{enumerate}
\item
Guess the size of the model $s$ 
between $1$ and $k$.

The domain of the model is assumed to be
a subset of $\bbB^{\log s}$.
\item 
For each $i\in \{1,\ldots,k\}$,
guess the element $a_i$ to be assigned to $u_i$.
\item 
For each $i\in \{1,\ldots,t\}$,
guess the unary predicate $U_i:= \{b_{i,1},\ldots,b_{i,p_i}\}$.
\item
Construct the following uniform $2$-DQBF:
\begin{align*}
\Phi \ := \ &
\forall \vx_1 \cdots \forall \vx_n
\exists y_1(\vx_{i_1},\ldots,\vx_{i_m})
\exists y_2(\vx_{j_1},\ldots,\vx_{j_m})
\\
& \qquad
\Big((\vx_{i_1},\ldots,\vx_{i_m})=(\vx_{j_1},\ldots,\vx_{j_m})
\to y_1=y_2\Big)
\ \land \
\left(\left(
\bigwedge_{1\leq i \leq n}
\num(\vx_i) \leq s-1
\right)
\ \to \
\phi
\right)
\end{align*}
where:
\begin{itemize}
\item 
$|\vx_1|=\cdots =|\vx_m|=|\vw_1|=\cdots =|\vw_m|=\log s$.
\item
$\num(\vx_i)\leq s-1$ is the formula stating that the number represented by $\vx_i,\vw_i$ are less than or equal to $s-1$.
\item 
$\phi$ is the formula obtained from $\psi$
by performing the following.
\begin{itemize}
\item 
Replace each atom $v_j=u_{j'}$ with $\vx_j=a_{j'}$.
\item 
Replace each atom $v_j=v_{j'}$ with $\vx_j=\vx_{j'}$.
\item 
Replace each atom $U_i(u_j)$ with
$\top$ or $\bot$ depending on whether $a_j$ is in the guessed set $U_i$.
\item 
Replace each atom $U_i(v_j)$ with the disjunction
$(\vx_j = b_{i,1}) \vee \cdots \vee (\vx_j=b_{i,p_i})$.
\item 
Replace each atom $S(v_{i_1},\ldots,v_{i_m})$ with
$y_1$.
\item 
Replace each atom $S(v_{j_1},\ldots,v_{j_m})$ with
$y_2$.
\end{itemize}
\end{itemize}
\end{enumerate}
It is routine to verify that 
$\Psi$ is satisfiable if and only if
there is a guess for the number $s$,
the set $\{b_{i,1},\ldots,b_{i,p_i}\}$
for each $1\leq i \leq t$
and the element $a_i$ for each $1\leq i \leq k$ such that $\Phi$ is satisfiable.
Since $\Phi$ is $2$-DQBF whose satisfiability can be checked in polynomial space~\cite{dqbf-sat23},
the whole algorithm runs in polynomial space.

\section{Proof of \Cref{lem:support-var}}

We recall \Cref{lem:support-var}:
\begin{quote}{\em
Let $\cS_1 \coloneq \{\vc : \neg \varphi[\vz_1/\vc] \ \text{is satisfiable}\}$
and $\cS_2 \coloneq \{\vc : \neg \varphi[\vz_2/\vc] \ \text{is satisfiable}\}$.
The set of support variables in $\exp(\Phi)$
is $\{X_{1,\vc} : \vc\in \cS_1\}\cup \{X_{2,\vc} : \vc\in \cS_2\}$.
Moreover, the number of support and non-support variables is $|\cS_1|+|\cS_2|$ and  $(2^{|\vz_1|}-|\cS_1|)+(2^{|\vz_2|}-|\cS_2|)$, respectively.}
\end{quote}
By definition,
$X_{1,\vc}$ is a support variable iff there is a clause $C_{\va,\vb}$ that contains it,
which is equivalent to $\varphi[(\vx,\vy)/ (\va,\vb)]=\bot$ and $\vc=\p {\va}{\vx}{\vz_1}$.
By definition, $X_{1,\vc}$ is a support variable iff $\vc\in \cS_1$.
Similarly, $X_{2,\vc}$ is a support variable iff $\vc\in \cS_2$.
Thus, the number of support variables is $|\cS_1|+|\cS_2|$. 
Since there are $2^{|\vz_1|}$ number of variables $X_{1,\vc}$
and  $2^{|\vz_2|}$ number of variables $X_{2,\vc}$,
the number of non-support variables is 
$(2^{|\vz_1|}-|\cS_1|)+(2^{|\vz_2|}-|\cS_2|)$.

\section{Proof of \Cref{lem:count-one-dqbf-all}}

Recall \Cref{lem:count-one-dqbf-all}:
\begin{quote}{\em
Let $\Upsilon\coloneq\forall \vu \exists y(\vv) \ \varphi$
be a satisfiable 1-DQBF.
\begin{itemize}
\item
The number of Skolem functions for $\Upsilon$ is $2^{m}$, where $m=2^{\abs{\vv}} - |\{\vc : \neg\varphi[\vv/\vc] \ \text{is satisfiable}\}|$. 
\item 
In particular, for a set $S \subseteq \bbB^{\abs{\vv}}$, the number of Skolem functions for $\Upsilon$ that differ on $S$ is $2^m$, where $m = {\abs{S}} - |\{\vc : \neg\varphi[\vv/\vc] \land (\vc\in S) \ \text{is satisfiable}\}|$.
\end{itemize}}
\end{quote}
It suffices to prove the first bullet.
The second bullet follows immediately.
By definition, $\exp(\Upsilon)$ is a 1-CNF formula consisting of unit clauses. 
Since $\Upsilon$ is satisfiable, $\exp(\Upsilon)$ is satisfiable as well, and the unit clauses determine a unique satisfying assignment.
This fixes the assignment of (support) variables in $\exp(\Upsilon)$, which is the the set of variables $X_{1,\vc}$ where $\neg\varphi[\vv/\vc]$ is satisfiable.
Since the non-support variables can be assigned arbitrarily and the total number of variables is $2^{|\vv|}$, the lemma follows.

\section{Engineering Algorithms~\ref{algo:main} and~\ref{algo:count-on-component}}

This section contains
some important details of our implementation of Algorithm~\ref{algo:main}.

\paragraph{Data structures.}
All important data structures such as $R$, $G_{\Phi}$, $\cS_1$ and $\cS_2$ 
are stored as BDD.
The BDD encoding of $G_{\phi}$ is necessary since explicitly constructing the graph $G_{\Phi}$ is infeasible.

We construct a BDD for the formula $\varphi_{tr}$ that represents the transitive closure of $G_{\Phi}$,
i.e., $\varphi_{tr}(u,v)=1$ iff 
$u,v$ are both support variables and there is a path from $u$ to $v$ in the graph $G_{\Phi}$.
It can be obtained using BDD-based reachability, e.g., \texttt{reach} command from ABC~\cite{ABC}, on the transition system $(I,T)$. 
Such BDD representation is useful in listing the Skolem function candidates.

Note also that by Lemma~\ref{lem:support-var},
we could have used the formula $\phi$ to represent the set $R$.
However, this is not practical since the set $R$ has to be updated continuously (in Line~9 in Algorithm~\ref{algo:main}).
For this reason, we also use a BDD to represent $R$.

Due to the magnitude of the numbers $N$ and $N_C$,
we use the sparse integer representation where we only store
the exponent in the binary representation.
For example, the number $10100101$ (in binary representation) is encoded as a list $\langle 0,2,5,7\rangle$.

\paragraph{Computing the number of support/non-support variables.}
We construct the BDD for each $\cS_1$ and $\cS_2$ from $\neg\phi$ with existential abstraction,
as defined in \Cref{lem:support-var}.
Their cardinalities can be computed easily due to the BDD structure, e.g., \texttt{Cudd\_CountMinterm} command from \tool{cudd} package.

\paragraph{Picking an arbitrary variable $X_{i,\vc}$ and the component $C$.}
To pick an arbitrary variable $X_{i,\vc}$,
we pick a satisfying assignment $\vc$ from the BDD $R$.
The component $C$ is obtained by computing the closure of $X_{i, \vc}$ in the implication graph.
We remove $C$ from $R$ by intersecting $R$ with the negation of $C$.

\section{Reduction from 2-DQBF to a symbolic reachability instance}\label{app:short_2dqr}

In this section we briefly recall
the reduction from 2-DQBF to a symbolic reachability instance as given in~\cite{dqbf-fmcad24}. 
For more details, please refer to \cite{dqbf-fmcad24}.
Given a 2-DQBF $\Phi := \forall \vx \exists y_1(\vz_1) \exists y_2(\vz_2)\; \varphi$, the idea is to check if there is a cycle that contains both a literal and its negation in the implication graph of $\exp(\Phi)$. 
The formula $\neg \varphi$ is the succinct representation of the implication graph.
For example, given two literals $L = X_{1, \va_1}^{b_1}$ and $L' = X_{2, \va_2}^{b_2}$, 
we can check if there is an edge from $X_{1, \va_1}^{b_1}$ to $X_{2, \va_2}^{b_2}$ by checking if $\neg \varphi \land \vz_1 = \va_1 \land \vz_2 = \va_2 \land y_1 = b_1 \land y_2 = \neg b_2$ is satisfiable. 

We construct the transition system over the states $(b, L, L_0)$ where $b\in \{0, 1\}$ and $L, L_0$ are literals of $\exp(\Phi)$. 
The initial condition is $b = 0 \land L = L_0$.
The state $(b, L, L_0)$ can transit to the state $(b, L', L_0)$ if $E(L, L')$
and 
the state $(0, L, L_0)$ can transit to the state $(1, L, L_0)$ if $L = \neg L_0$. 
A state of the form $(1, \neg L_0, L_0)$ is reachable  from the initial state
if and only if the 2-DQBF $\Phi$ is unsatisfiable.

\section{More experimental results}

\subsection{More experimental results on PEC}
\Cref{fig:pairwise-pec} shows pairwise comparisons between \tool{sharp2DQR} and \tool{Exp+ganak} on the PEC instance.
Each point represents an instance. 
The axes in log scale represents the time needed for the corresponding solver.
We can again see that for instances in PEC\_opt, most points lie on the top left portion, i.e., 
\tool{sharp2DQR} is better. 
However, for most instances in PEC\_small, the points lie on the bottom right portion, 
i.e., \tool{Exp+ganak} is better. 

\paragraph{Some statistics.} 
Most of the solved instances contains one component, 
the largest number of components that \tool{sharp2DQR} solved is 244. 
For PEC\_opt, the largest count we solved is around $2^{1.8\text{e}19}$, while the median is $2^{9220}$. 
For solutions over support variables, a lot of them contains one partial solution per component, 
and a few of them contains $2$ to $2^{589824}$ total partial solutions over the support variables. 
For PEC\_small, the largest count we solved is $2^{1088}$, 
while the median is $2^{4}$. For solutions over support variables, again, 
a lot of them contain one partial solution per component, and 
a few of them contain $2$ to $2^{514}$ total partial solutions over the support variables.

\begin{figure*}[h]
     \centering
     \includegraphics[width=.9\linewidth]{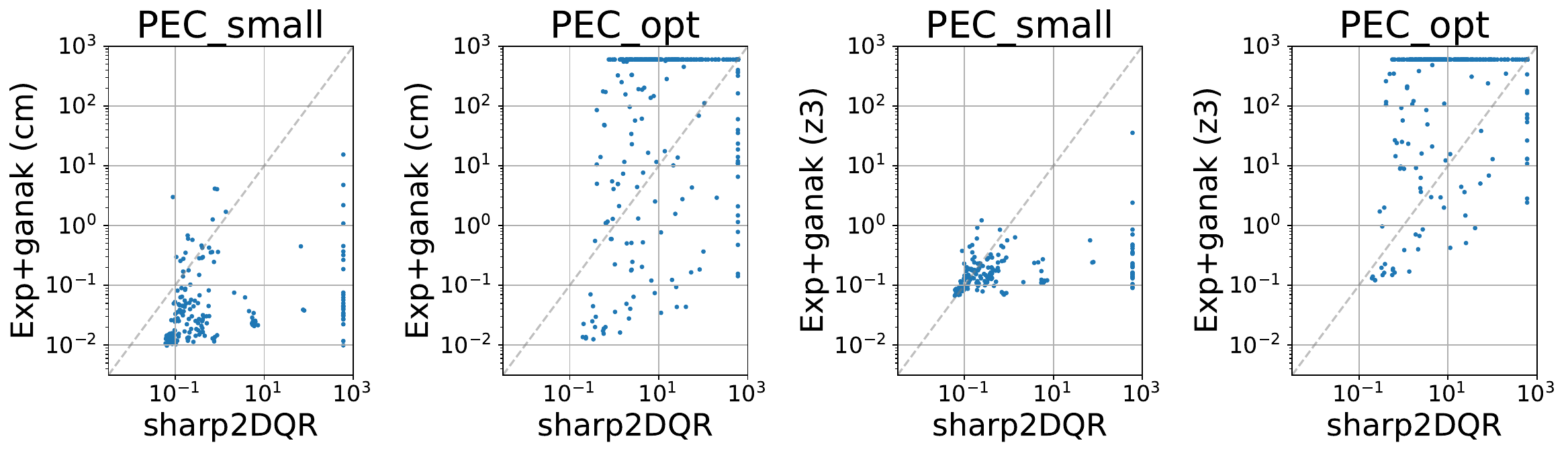}%
\caption{Pairwise comparison between \tool{sharp2DQR} and \tool{Exp+ganak}.}
\label{fig:pairwise-pec}
 \end{figure*}

By separating the time for computing the expansion and counting the number of solutions as in \Cref{fig:2c_sep} for the 2-colorability instances, we notice that \tool{z3} performs better on these instances. Additionally, for \tool{z3}, the time spent on expansion was similar to the time used on counting the number of solutions with \tool{ganak}, indicating that expansion is not the sole bottleneck.

\subsection{2-colorability}

\begin{figure}[h]
    \centering
    \includegraphics[width=\linewidth]{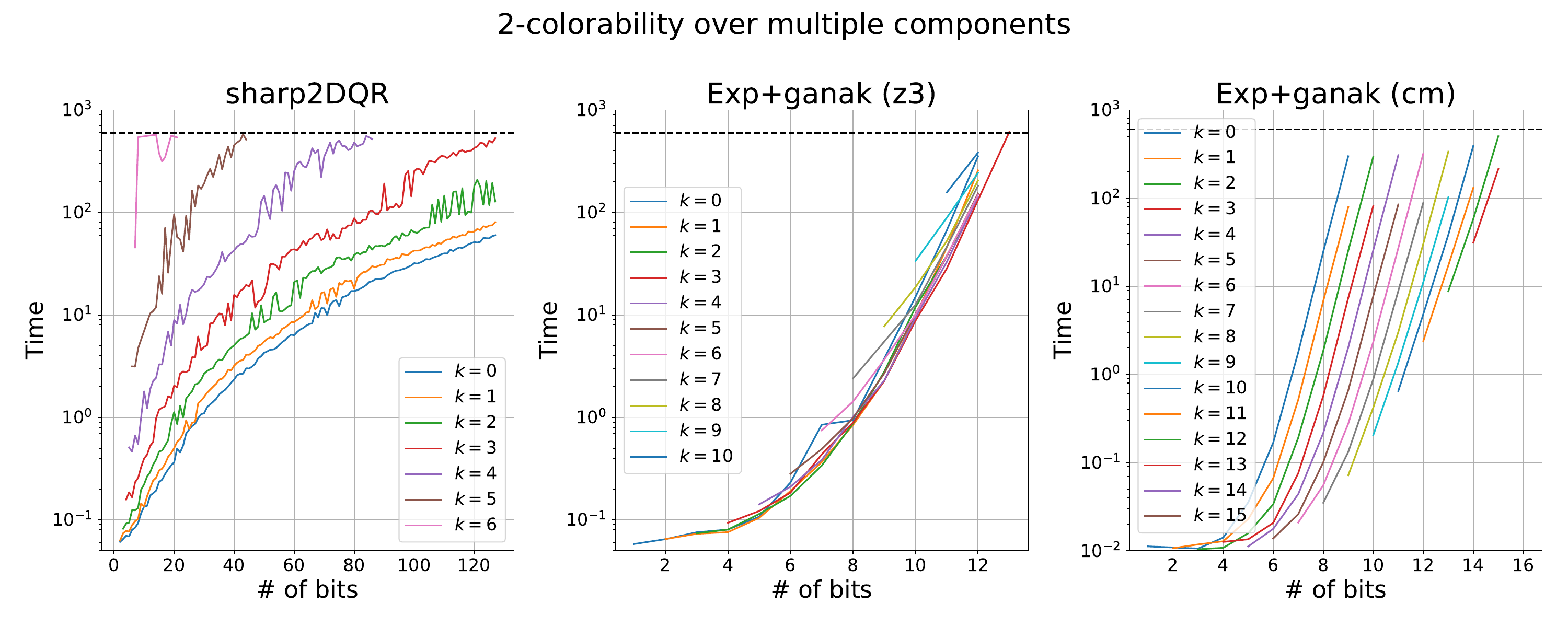}
    \caption{Performance of \tool{sharp2DQR} and \tool{Exp+ganak} on the counting the numbers of 2-colorings over $G_{n, k}$. The horizontal axis represents the number of bits in the graph, i.e. $n$.}
    \label{fig:app_2-colorings-multiple-comp}
\end{figure}

\begin{figure*}[h]
    \centering
    \includegraphics[width=0.8\linewidth]{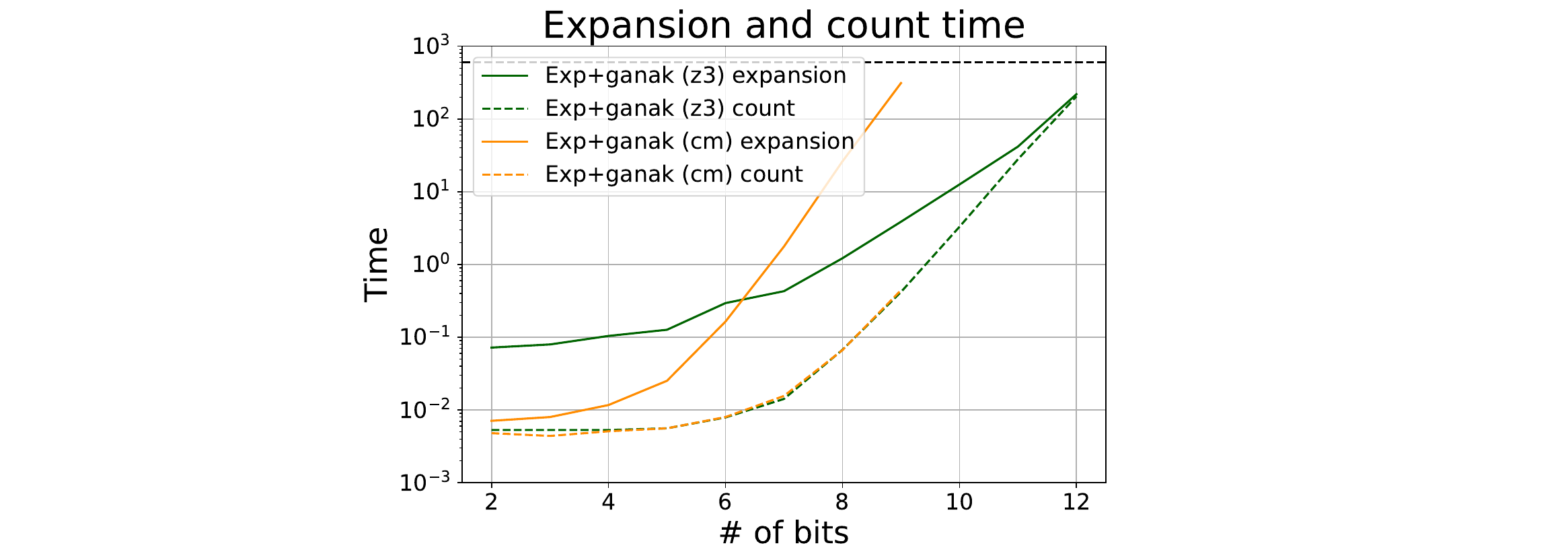}
    \caption{Time used on expansion and counting for \tool{Exp+ganak} on 2-colorability instances.}
    \label{fig:2c_sep}
\end{figure*}

Consider an $n$-bit graph with the edge circuit: 
\begin{equation}
C_{n,k}(\vx, \vx') := \bigwedge_{i = 1}^k x_i = x_i' \land x_{k + 1} \neq x_{k+1}'
\end{equation}
where $\vx =(x_1,\ldots,x_n)$ and $\vx'=(x_1',\ldots,x_n')$.
The circuit represents a graph $G_{n, k}$ which is a union of $2^{k}$ complete bipartite graph 
and each component has size $2^{n - k - 1}$. 

We consider the DQBF:
\[
\text{TWO-COL}_{n,k} \ := \
\forall \vx \forall \vx' \exists y(\vx) \exists y(\vx'). \; 
((\vx = \vx') \to (y = y')) \land (E_{n,k} \to (y \neq y'))
\]
When $k=0$, these are the same instances as in~\cite{dqbf-fmcad24}.
Each Skolem function of $\text{TWO-COL}_{n,k}$ corresponds to a 2-coloring of $G_{n,k}$.
The number of Skolem functions for $\text{TWO-COL}_{n,k}$ is $2^k$.
\Cref{fig:app_2-colorings-multiple-comp} shows the experimental results comparing \tool{sharp2DQBR} and \tool{Exp+ganak}.
For 2-colorability instances \tool{sharp2DQR} can again solve for instances larger than 15 bits while \tool{Exp+ganak} can't.
When $k$ is small (up to $3$), \tool{sharp2DQR} can even handle instances up to $127$ bits.
However, for large $k$ with $n \leq 15$, \tool{Exp+ganak} outperforms \tool{sharp2DQR}.

\subsection{Independent set}

We consider the DQBF formula:
\[
\text{IND-SET}_{n,k} \ := \        
\forall \vx \forall \vx' \exists y(\vx) \exists y(\vx'). 
\; ((\vx = \vx') \to (y = y')) \land (C_{n,k} \to (\neg y \lor  \neg y'))
\]
The number of Skolem functions of $\text{IND-SET}_{n,k}$
is the number of (not necessarily maximal) independent set  for $G_{n, k}$, 
which is $(2 \times 2^{2^{n - k - 1}} - 1) ^{2^k}$.

\begin{figure}
\centering
\includegraphics[width=\linewidth]{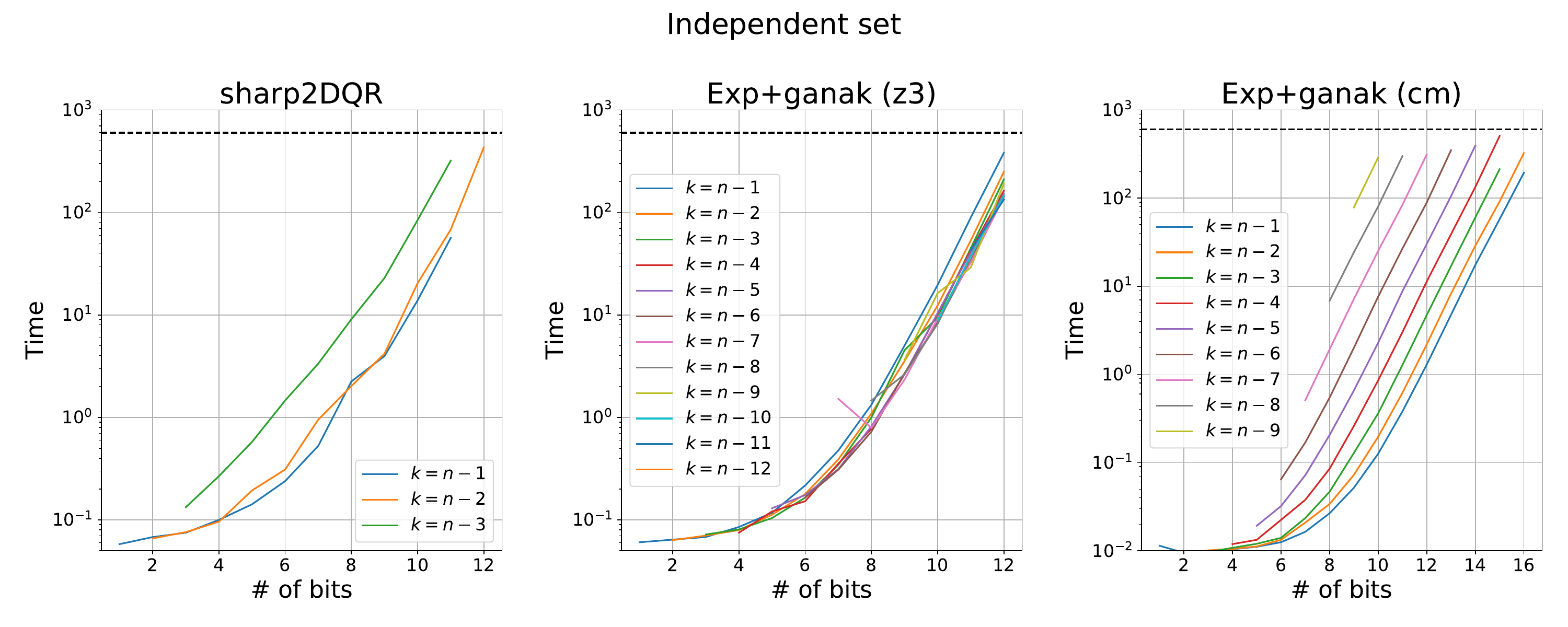}
    \caption{Performance of \tool{sharp2DQR} and \tool{Exp+ganak} on the counting the numbers of independent sets over $G_{n, k}$. The horizontal axis represents the number of bits in the graph, i.e. $n$.}
    \label{fig:app_indep-set}
\end{figure}

\Cref{fig:app_indep-set} shows the experimental results.
\tool{sharp2DQR} can only handle instances with $n \leq 12$ and $n-k \leq 3$, while \tool{Exp+ganak} with \tool{z3} can handle all instances with $n \leq 12$, and with \tool{cryptominisat}, \tool{Exp+ganak} can handle instances up to $n = 16$. 
\tool{sharp2DQR} performed worse in this set of instances because it has to do a lot of enumeration since every subset of an independent set is an independent set.

\subsection{Encoding the 2-colorability and independent set instances with first order logic}

We can encode both type of instances in first order logic with some labeling predicates. 
For example, for the graph $E_{n, k}$, consider the following sentence over the signature $\{U_1,\ldots,U_n,C\}$
where all of the predicates are unary.
\begin{align*}
\Psi_{n,k} :=\;& \forall x \exists y\; \Big(\text{1-type}(y) \equiv \text{1-type}(x) + 1 \bmod 2^n\Big) \
\ \land\ \forall x \forall y\; \tilde{E}_{n, k} \to (C(x) \neq C(y))
\end{align*}
where $\text{1-type}(y) \equiv \text{1-type}(x) + 1 \bmod 2^n$ is the following formula:
$$
\left(
\bigwedge_{1\leq i\leq n} U_i(x)\land \neg U_i(y)
\right)
\lor
\left(
\bigvee_{1\leq i \leq n}
\left(
\neg U_i(x) \land U_i(y)
\land \bigwedge_{1\leq j \leq i-1} U_j(x)\land \neg U_j(y)
\land \bigwedge_{i+1\leq j \leq n} U_j(x)=U_j(y)
\right)
\right)
$$
Intuitively, $\text{1-type}(x)$ is a maximal consistent subset of 
$\{U_1(x),\neg U_1(x),\ldots,U_n(x),\neg U_n(x)\}$
and $\text{1-type}(y)$ is a maximal consistent subset of 
$\{U_1(y),\neg U_1(y),\ldots,U_n(y),\neg U_n(y)\}$.
We use 1-type to represent a number between $0$ and $2^n-1$,
where $U_i(x)$ and $U_i(y)$ represent the $i$-th bit (of $x$ and $y$).
The atom $C(x)$ and $C(y)$ represent the color of the element $x$ and $y$.
The intention of the sentence 
$\forall x \exists y\; \Big(\text{1-type}(y) \equiv \text{1-type}(x) + 1 \bmod 2^n\Big)$
is to ensure all numbers between $0$ and $2^n-1$ exists.

The formula
$\tilde{E}_{n, k}$ encodes the edge relation where we replace $x_i$ with $U_i(x)$ and 
$x_i'$ with $U_i(y)$ in ${E}_{n, k}$. 
The sentence $\forall x \forall y\; \tilde{E}_{n, k} \to (C(x) \neq C(y))$ states that no two adjacent elements have the same color.

Note that the number of models of $\Psi_{n,k}$ with size $2^n$
is the number of Skolem functions of $\text{TWO-COL}_{n,k}$ multipled by $(2^n)!$, 
due to the labeling of the elements in the models of $\Psi_{n,k}$.
In our experiment, we tried counting the number of models of the resulting formula with \tool{wfomc}~\cite{wfomc_tool}, 
where we set the domain size to $2^n$.
However, it can only solve instances with domain size up to 4.

\end{document}